\def\thmhead@plain#1#2#3{%
  \thmname{#1}\thmnumber{\@ifnotempty{#1}{ }\@upn{#2}}%
  \thmnote{ {\the\thm@notefont#3}}}
\let\thmhead\thmhead@plain
\newcounter{AppCounter}
\def\restrict#1{\raise-.5ex\hbox{\ensuremath|}_{#1}}
\newtheorem{lemma}{Lemma}[section]
\newtheorem{proposition}[lemma]{Proposition}
\newtheorem{remark-definition}[lemma]{Remark-Definition}
\newtheorem{theorem}[lemma]{Theorem}
\newtheorem{corollary}[lemma]{Corollary}
\newtheorem{conjecture}[lemma]{Conjecture}
\newtheorem{proposition-conjecture}[lemma]{Proposition-conjecture}
\newtheorem{question}[lemma]{Question}
\theoremstyle{definition}
\newtheorem{definition}[lemma]{Definition}
\newtheorem{remark}[lemma]{Remark}
\newcommand{\marginnote}[1]
{%\mbox{}\marginpar{\center{\hspace{0pt}\tiny{\bf#1}}}
}
\newcounter{cy}
\newcounter{bk}
\newcounter{dps}
\title{Higher-dimensional Euler fluids and Hasimoto transform: counterexamples and generalizations}
\author{ Boris Khesin\thanks{Department of Mathematics, University of Toronto, Toronto, ON M5S 2E4, Canada; e-mail:  \tt{khesin@math.toronto.edu}}
  ~and Cheng Yang\thanks{Department of Mathematics and Statistics, McMaster University, Hamilton, ON L8S 4K1, Canada, and the Fields Institute, Toronto, ON M5T 3J1, Canada; 
 e-mail: \tt{yangc74@math.mcmaster.ca}
  }}
\date{}
\begin{document}

\maketitle
\begin{abstract}
The binormal (or  vortex filament) equation provides the localized induction approximation of the 3D incompressible Euler
equation. We present explicit solutions of the binormal equation
in higher-dimensions that collapse in finite time. The local nature of this phenomenon suggests 
 the appearance of singularity in nearby vortex blob solutions of the Euler equation in 5D and higher.

Furthermore, the Hasimoto transform takes the binormal equation to the NLS and barotropic fluid equations. 
We show that in higher dimensions the existence of such a transform 
would imply the conservation of the Willmore energy in skew-mean-curvature flows and 
present  counterexamples  for vortex membranes based on products of spheres. 
These (counter)examples imply that there is no straightforward generalization to higher dimensions of the 1D Hasimoto transform. We derive its replacement,  the evolution equations
for the mean curvature and torsion form for membranes, thus generalizing the barotropic fluid and Da Rios equations. 
\end{abstract}

\tableofcontents

%\newpage
\section{Introduction} \label{intro}

The vortex filament equation describes the motion of a curve in $\mathbb R^3$ under the binormal flow: each point on the curve is moving in the binormal direction with a speed equal to the curvature at that point. This equation is a ``local"
approximation of the 3D Euler equation for vorticity supported on a curve. 
%It is known that the filament equation is a Hamiltonian equation, whose Hamiltonian function 
%with respect to the Marsden-Weinstein symplectic structure is the length of the curve.
The membrane binormal (or skew-mean-curvature) flow is a natural higher-dimensional generalization 
of the 1D binormal flow \cite{Jer, Sh, Kh}:  instead of curves in $\mathbb R^3$, one traces the evolution 
of codimension 2 submanifolds in $\mathbb R^d$ (called vortex membranes), where the velocity 
of each point on the membrane is given by the  skew-mean-curvature vector. The latter 
is the mean curvature vector to the membrane rotated in the normal plane by $\pi/2$. 
The binormal equations are Hamiltonian  in all dimensions  with respect to the so-called Marsden-Weinstein 
symplectic structure and the Hamiltonian functional given by the length of the vortex filament or, more generally, the volume of the membrane \cite{HaVi, Kh}.

These equations in any dimension arise as an approximation of the incompressible Euler equation in which 
the vorticity is supported on a membrane and the evolution is governed by local interaction only (``LIA" - localized induction approximation). Below we present explicit solutions of the LIA for the Euler equation based on sphere products 
and prove that some of them exist for a finite time only  and then collapse, see Theorem \ref{prop:Sphere_product}. The simplest such case is the motion of a three-dimensional vortex  membrane $ \mathbb S^1\times\mathbb S^2$ in $\mathbb R^5$, and 
the local nature of this collapse hints to the singularity in the higher-dimensional Euler equations. While the singularity problem for the 3D Euler (and Navier-Stokes) equations is well known and wide open, it is equally open in dimensions $n>3$. Note that the incompressibility condition is seemingly less restrictive in higher dimensions, and hence the incompressible Euler equation should behave somewhat similar to the Burgers or compressible Euler equations, where the emergence of shock waves, and hence no long-time existence, is well known.
In spite of this similarity, to the best of our knowledge, there are yet no explicit results about  emergence of singularity in higher-dimensional incompressible Euler equations. Hopefully, the sphere product example of the binormal motion of
$ \mathbb S^1\times\mathbb S^2\subset \mathbb R^5$ could shed some light on the finite-time existence of
a smooth solution of the Euler equation tracing this motion.

\smallskip

The second goal of this paper is to give a counterexample to the existence of a simple form of the Hasimoto transform 
in higher dimensions, while to present a  replacement of the corresponding equations.
Namely,  Da Rios in 1906 brought in the idea of LIA of the Euler equation to study the vortex dynamics and 
derived the evolution equations for the  curvature $\kappa$ 
and torsion $\tau$ of a curve moving according to the binormal flow. His work 
is known today mostly thanks to his advisor Levi-Civita, who promoted and extended it (see \cite{Ricc} for the historical survey).
The LIA method and Da Rios equations were reconsidered in the 1960s, see \cite{Betc}. 
These equations  appear in several other contexts, e.g., in the study of 
 one-dimensional classical spin systems \cite{ArKh, Cal}. 

More importantly, in 1972 Hasimoto discovered a transformation yielding a complex-valued wave function $\psi=\kappa \exp(i\int\tau)$ from the pair of real functions $(\kappa,\tau)$ such that 
this wave function $\psi$ satisfies the nonlinear Schr\"{o}dinger equation. Furthermore, by considering the evolution of the density  
$\rho=\kappa^2$ and the velocity be $v=2\tau$ one obtains the  equations of barotropic-type (quantum) 1D fluids, see Figure \ref{fig:diag1} for the relations between these equations.

A natural question is whether the higher-dimensional binormal flow possesses similar relations to
 Schr\"{o}dinger- and barotropic-type equations, as well as what are its implications for the Euler hydrodynamics. 
Finding a higher-dimensional version of the Hasimoto transform 
was a folklore problem for quite a while, see e.g. \cite{Jer,  Kh, SaWa, Sh, SongSun}.
It is showed in \cite{Song} that the Gauss map of the SMCF satisfies a Schr\"{o}dinger flow equation. 
It was observed in \cite{KhMM} that to construct a higher dimensional generalization of the Hasimoto transformation one needs
  to prove a  conservation law for the Willmore energy. Namely, the conjectural invariance of the Willmore energy would imply the simpler of the two barotropic fluid equations, the continuity equation, which is a necessary condition for the existence of a  Hasimoto transformation.

In the present paper, we give a counterexample to the energy invariance conjecture by describing explicitly the motion 
of Clifford tori under the skew-mean-curvature flow and show that their Willmore energy is not conserved, see Proposition \ref{prop:filament_Willmore}. 
Essentially, these counterexamples imply that there is no straightforward generalization of the Hasimoto transform 
to relate the binormal and barotropic (and hence Schr\"{o}dinger) equations in higher dimensions, and if it exists, it must be necessarily complicated. 

\smallskip

Finally, we  introduce a natural generalization of the torsion for codimension 2 membranes (Section \ref{sec:torsion})
and derive the evolution equations
for the mean curvature and the torsion form, thus replacing equations of a barotropic-type fluid in the Hasimoto transform and
generalizing the Da Rios equations, Theorems  \ref{eq:source}, \ref{thm: conti_eq}.
These counterexamples emphasize the difference between the 1D and higher-dimensional skew-mean-curvature flows and 
might be particularly useful to prove the vortex filament conjecture for membranes, cf. \cite{Jer}. 

%\cy{Now it's 3/4 of the original size. Please feel free to adjust by using $\backslash$resizebox.}

\begin{figure}[H]
% Define block styles
\tikzstyle{block} = [rectangle, draw, fill=blue!20,
    text width=6.5em, text centered, rounded corners, minimum height=6em]
\tikzstyle{line} = [draw, -latex']
\centering
\resizebox{0.75\width}{0.75\height}{
\begin{tikzpicture}[node distance = 5.1cm,  auto]
    % Place nodes
    \node [block] (1DQF) {1D barotropic\\ (quantum) fluid};
    \node [block, above left of=1DQF] (filament) {filament equation};
    \node [block, below left of=1DQF] (NLS) {1D NLS};
    \node [block, right of=1DQF] (nDQF) {higher-dimensional barotropic-type fluid};
    \node [block, above right of=nDQF] (binormal) {skew-mean-curvature flow};
    \node [block, below right of=nDQF] (nNLS) {$n$-dimensional NLS};
   %draw edges
    \draw[<->, latex-latex](filament) -- node [align=right,left] {From \\$\gamma$ to $(\kappa,\tau)$\\+\\Hasimoto\\transform\\$\psi=\kappa e^{i\int\tau ds}$} (NLS);
    \draw[<->, latex-latex](filament) -- node [align=left] {$\rho=\kappa^2$\\$v=2\tau$} (1DQF);
    \draw[<->, latex-latex](NLS) -- node [align=center,right] {\\\\ Hasimoto\\transform\\$\psi=\kappa e^{i\int\tau ds}$} (1DQF);
    \draw[->, -latex,  decorate, decoration=snake](1DQF) --  (nDQF);
    \draw[->, -latex, decorate, decoration=snake](filament) --  (binormal);
    \draw[->, -latex, decorate, decoration=snake](NLS) --  (nNLS);
    \draw[<->, latex-latex](binormal) -- node [right] {?} (nNLS);
    \draw[<->, latex-latex](binormal) -- node [align=left,left] {$\rho=|H|^2$\\$v=\,\chi$} (nDQF);
    \draw[<->, latex-latex](nNLS) -- node [align=center,left] {\\\\Madelung\\transform\\$\Psi=\sqrt{\rho e^{i\theta}}$} (nDQF);
\end{tikzpicture}
}
\caption{Diagram of relations between equations in 1D and in higher dimensions} \label{fig:diag1}
\end{figure}

\medskip

{\bf Acknowledgments.} We are indebted to R.~Jerrard and B.~Shashikanth for many fruitful discussions.
B.K. was partially supported by an NSERC research grant. A part of this work was done
while C.Y. was visiting  the Fields Institute in Toronto and the Instituto de Ciencias Matem\'{a}ticas (ICMAT) in Madrid. C.Y. is grateful for their supports and kind hospitality.

%%%%%%%%%%%%%%%%%%%%%%%%%%%%%%%%%%

\medskip

\section{Skew-mean-curvature flows}\label{sec:smcf}

\subsection{The vortex filament equation}\label{subsec:filament}

Consider the space of (nonparametrized) knots $\mathfrak K$ in $\mathbb R^3$, which is the set of images of all smooth embeddings $\gamma:S^1\rightarrow\mathbb R^3$. 

\begin{definition}
The {\it vortex filament  equation} is $\partial_t \gamma=\gamma'\times\gamma'',$
where $\gamma':=\partial \gamma/\partial s$ with respect to the arc-length parameter $s$ 
of the curve $\gamma$. Alternatively, the filament equation can be rewritten in the
{\it binormal form} as 
\begin{equation}\label{eq:filament}
\partial_t\gamma=\kappa\,\bf{b}\,, 
\end{equation}
where, respectively, $\kappa$ is the curvature  and $\bf{b}={\bf t}\times {\bf n}$ is the binormal vector, the cross-product of the tangent and normal unit vectors,  at the corresponding point of the curve $\gamma$.
%Thus according to this equation each point on the curve  $\gamma$ is moving in the binormal direction with a speed equal to the curvature at that point.

It is known that the binormal equation is Hamiltonian with respect to the so-called Marsden-Weinstein symplectic structure  on the space of knots $\mathfrak K$, the corresponding Hamiltonian function is the length functional $L(\gamma)=\int_{\gamma}|\gamma'(s)|\,ds$ of the curve.
\end{definition}

\begin{definition}\label{def:symp}
Let $\gamma\in\mathfrak K$ be an oriented space curve in $\mathbb R^3$, then the {\it Marsden-Weinstein symplectic structure} $\omega^{MV}$ on the space $\mathfrak K$ is given by
\begin{equation}\label{eq:MW}
\omega^{MV}(\gamma)(u,v)=\int_{\gamma}i_ui_v\mu=\int_{\gamma}\mu(u,v,\gamma')\,ds,
\end{equation}
where $u$ and $v$ are two vector fields attached to $\gamma$, and $\mu$ is the volume form in $\mathbb R^3$.
\end{definition}

%\begin{remark}
%The  definition of the Marsden-Weinstein symplectic structure $\omega^{MV}$ implies that it does not depend  on the parametrization of the curve $\gamma$, and hence one can  take any parameter $s$ in the equation (\ref{eq:MW}).
%\end{remark}

The vortex filament equation also serves as an approximation for the 3D incompressible Euler equation for the vorticity
confined to the curve $\gamma$ (hence, the name), where only local interaction is taken into account \cite{ArKh, Cal}, cf. Section \ref{sect:vortex}.

%%%%%%%%%%%%%%%%%%%%%%%%%%%%%%%%%%

\medskip

\subsection{Higher-dimensional  binormal flows}

The higher-dimensional generalization of the 1D binormal flow is also called the skew-mean-curvature flow, and it is defined as follows:

\begin{definition}\label{def:smcf}
Let $\Sigma^n\subset\mathbb R^{n+2}$ be a codimension  2 membrane (i.e., a compact oriented submanifold of codimension 2 in the Euclidean space $\mathbb R^{n+2}$), the {\it skew-mean-curvature (or, binormal) flow} is described by the equation:
\begin{equation}\label{eq:smcf}
\partial_t p =-J(H(p)),
\end{equation}
where $p\in \Sigma$, $H(p)$ is the mean curvature vector to $\Sigma$ at the point $p$, $J$ is the operator of positive $\pi/2$ rotation in the two-dimensional normal space $N_p\Sigma$ to $\Sigma$ at $p$.

The skew-mean-curvature flow  (\ref{eq:smcf}) is a natural generalization of the binormal equation \cite{Jer}:
in dimension $n = 1$ the mean curvature vector of a curve $\gamma$ at a point is $H=\kappa\,\bf{n}$, where $\kappa$ is the curvature of the curve $\gamma$ at that point, hence the skew-mean-curvature flow becomes the binormal equation
\eqref{eq:filament}:
$\partial_t\gamma=-J(\kappa\,\bf{n}) = \kappa\,\bf{b}$. It was studied for codimension 2 vortex membranes in $\mathbb R^4$ in \cite{Sh} and in any dimension in \cite{Kh}.
\end{definition}

It turns out that on the  infinite-dimensional space  $\mathfrak M$ of codimension 2 membranes, one can also define the Marsden-Weinstein symplectic structure in a similar way:

\begin{definition}\label{def:symp2}
The Marsden-Weinstein symplectic structure $\omega^{MV}$ on the space $\mathfrak M$  of codimension 2 membranes is
\begin{equation}\label{eq:MW2}
\omega^{MV}(\Sigma)(u,v)=\int_{\Sigma}i_ui_v\mu,
\end{equation}
where $u$ and $v$ are two vector fields attached to the membrane $\Sigma\in\mathfrak M$, 
and $\mu$ is the volume form in $\mathbb R^{n+2}$.
\end{definition}

Define the Hamiltonian functional ${\rm vol}(\Sigma)$ on the  space $\mathfrak M$ which associates  the $n$-dimensional volume to a compact $n$-dimensional membrane $\Sigma^n\subset \mathbb R^{n+2}$.

\begin{proposition}
The skew-mean-curvature flow \eqref{eq:smcf} 
is the Hamiltonian flow on the membrane space $\mathfrak M$ equipped with the
Marsden-Weinstein structure and with the Hamiltonian given by the volume functional  ${\rm vol}$.
\end{proposition}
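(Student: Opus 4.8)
The plan is to compute both sides of Hamilton's equation $i_{\dot\Sigma}\,\omega^{MV} = d(\mathrm{vol})$ at an arbitrary membrane $\Sigma^n\subset\mathbb R^{n+2}$ and check that they agree precisely when $\dot\Sigma = -J(H)$. First I would compute the differential of the volume functional. For a normal variation $v$ of $\Sigma$, the first variation of volume is the classical formula $d(\mathrm{vol})(\Sigma)(v) = -\int_\Sigma \langle H, v\rangle\, d\mathrm{vol}_\Sigma$, where $H$ is the mean curvature vector and $\langle\cdot,\cdot\rangle$ is the Euclidean inner product; tangential components of $v$ contribute nothing since they only reparametrize $\Sigma$. (One should note that $\omega^{MV}$ is degenerate along tangential directions as well, so the equation is really posed on normal vector fields, which is consistent.)

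Next I would unwind the Marsden--Weinstein form. Fix a point $p\in\Sigma$ and an oriented orthonormal frame in which $e_1,\dots,e_n$ span $T_p\Sigma$ and $\nu_1,\nu_2$ span $N_p\Sigma$ with $J\nu_1=\nu_2$. Evaluating $\mu(u,v,e_1,\dots,e_n)$ and using that $\mu$ restricted to this frame is the determinant, only the normal components of $u$ and $v$ survive, and one gets $i_u i_v\mu\big|_{T\Sigma} = \big(\langle u,\nu_1\rangle\langle v,\nu_2\rangle - \langle u,\nu_2\rangle\langle v,\nu_1\rangle\big)\,d\mathrm{vol}_\Sigma = \langle Ju, v\rangle\, d\mathrm{vol}_\Sigma$, where in the last step I use that $J$ rotates the normal plane by $\pi/2$ so that $\langle Ju,v\rangle = \langle u,\nu_1\rangle\langle v,\nu_2\rangle - \langle u,\nu_2\rangle\langle v,\nu_1\rangle$ for normal vectors. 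Hence $\omega^{MV}(\Sigma)(u,v) = \int_\Sigma \langle Ju, v\rangle\, d\mathrm{vol}_\Sigma$, which is exactly the statement that $\omega^{MV}$ is the $L^2$-pairing twisted by the almost-complex structure $J$ on the normal bundle.

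Combining the two computations: the Hamiltonian vector field $X_{\mathrm{vol}}$ is characterized by $\int_\Sigma \langle J X_{\mathrm{vol}}, v\rangle\, d\mathrm{vol}_\Sigma = -\int_\Sigma \langle H, v\rangle\, d\mathrm{vol}_\Sigma$ for all normal $v$, hence $J X_{\mathrm{vol}} = -H$ pointwise, i.e. $X_{\mathrm{vol}} = -J^{-1}H = J H$... and here one must be careful with the sign convention: since $J^2=-\mathrm{id}$ on the normal plane, $J^{-1} = -J$, so $X_{\mathrm{vol}} = -(-J)H = JH$. To match \eqref{eq:smcf} one reconciles this with the orientation convention for $\omega^{MV}$ (the sign in $i_X\omega = d\mathcal H$ versus $i_X\omega = -d\mathcal H$, and the choice of which $\pi/2$ rotation is ``positive''); with the conventions of Definitions \ref{def:smcf} and \ref{def:symp2} this yields $\partial_t p = -J(H(p))$.

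The main obstacle is not any single computation — each step is short — but rather bookkeeping the sign and orientation conventions consistently: the orientation of $\Sigma$, the co-orientation of $N\Sigma$ induced by the ambient orientation of $\mathbb R^{n+2}$, the sign in the first variation of volume, the sense of the rotation $J$, and the Hamilton's-equation sign convention all interact, and the claimed equation $\partial_t p=-J(H)$ pins down a specific consistent choice. I would also remark that on the infinite-dimensional space $\mathfrak M$ these identities are formal in the usual sense of weak symplectic geometry, and that the computation reproduces, for $n=1$, the classical fact that the binormal flow is Hamiltonian for the length functional (Definition \ref{def:symp}), which serves as a consistency check; see \cite{HaVi, Kh} for the detailed treatment.
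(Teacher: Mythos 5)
Your argument is correct and is essentially the paper's own proof, fleshed out: the paper likewise combines the two facts that the Marsden--Weinstein form is the $J$-twisted $L^2$-pairing on normal vector fields (so the Hamiltonian field is $J$ applied to the gradient) and that the gradient of the volume functional is $-H$, citing \cite{KuSc, Kh} for the latter rather than recomputing the first variation as you do. The only soft spot is your final sign reconciliation, but that ambiguity is purely a matter of the Hamilton's-equation and orientation conventions, which the paper glosses over just as quickly.
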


\begin{proof}
In a nutshell, the Marsden-Weinstein  symplectic structure is the averaging of the symplectic structures
in all 2-dimensional normal planes $N_p\Sigma$ to $\Sigma$, hence the skew-gradient for any functional on submanifold 
$\Sigma$ is obtained from its gradient field attached at $\Sigma\subset \mathbb R^{n+2}$ by applying the fiberwise 
$\pi/2$-rotation operator $J$ in $N_p\Sigma$.
On the other hand, the fact that minus the mean curvature vector field $-H$ is the gradient
for the volume functional ${\rm vol}(\Sigma)$ is well-known, see e.g. \cite{KuSc, Kh}. Hence the Hamiltonian field on  $\mathfrak M$
for the Hamiltonian  functional ${\rm vol}(\Sigma)$ is given by $-JH(p)$ at any point $p\in \Sigma$.
\end{proof}

%Unlike the 1-dimensional case, for $n\geq2$ the skew-mean-curvature flow is apparently non-integrable.

\medskip

%%%%%%%%%%%%%%%%%%%%%%%%%%%%%%%%%

\subsection{Collapse in binormal flows of sphere products}\label{subsec:Clifford}

Binormal flows are localized approximations of the Euler equation for an incompressible 
fluid filling $\mathbb R^{n+2}$ whose vorticity is supported on the membrane $\Sigma^n$, see  \cite{Kh, Sh} and the 
next section. This is why their short/long-time existence results could shed some light on the motion of fluid flows themselves.
It turns out that the following family of membrane motions is of particular interest.
%It turns out, for submanifolds of dimension $n\geq2$, Conjecture \ref{conj:Willmore} does not necessarily hold. Here we describe a family of counterexamples to the conjecture by considering the evolution of the Clifford tori and their generalizations in the skew-mean-curvature flow.

\begin{theorem}\label{prop:Sphere_product}
Let $F:\Sigma=\mathbb S^m(a)\times\mathbb S^l(b)\hookrightarrow\mathbb R^{m+1}\times\mathbb R^{l+1}=\mathbb R^{m+l+2}$
be the product of two spheres of radiuses $a$ and $b$. Then the evolution  $F_t$ of this surface $\Sigma$
in the binormal flow is the product of spheres 
$F_t(\Sigma)=\mathbb S^m(a(t))\times \mathbb S^l(b(t))$ at any $t$ with radiuses changing monotonically according to the ODE system:
%$$a(t)=(a/m)\lambda^{-t}\;\text{and}\; b(t)=(b/l)\lambda^t$$ with $\lambda:=e^{ml/(ab)}$. 
\begin{equation}\label{eq:CliffordODE2}
\left\{\begin{array}{rcl}
\dot a&=&-l/b,\\
\dot b&=&+m/a.
\end{array}\right.
\end{equation}
For $0<m<l$ the corresponding solution $F_t$ exists only for finite time and collapses 
at $t=a(0)b(0)/(l-m)$. 
\end{theorem}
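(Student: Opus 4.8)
The plan is to compute the skew-mean-curvature vector of a product of round spheres explicitly and show it is everywhere tangent to the family of concentric sphere products, so the flow preserves this ansatz. First I would set up coordinates: write a point of $\mathbb S^m(a)\times\mathbb S^l(b)\subset\mathbb R^{m+1}\times\mathbb R^{l+1}$ as $(a\xi,b\eta)$ with $\xi\in\mathbb S^m(1)$, $\eta\in\mathbb S^l(1)$. The normal plane $N_p\Sigma$ at such a point is spanned by the two unit vectors $\nu_1=(\xi,0)$ and $\nu_2=(0,\eta)$ (the outward radial directions in each factor), since the tangent space is $T_\xi(a\mathbb S^m)\oplus T_\eta(b\mathbb S^l)$ and the ambient space splits. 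Next I would compute the mean curvature vector. For a round sphere $\mathbb S^m(a)\subset\mathbb R^{m+1}$ the mean curvature vector (trace of the second fundamental form, with my normalization matching ``$H=\kappa\mathbf n$ when $n=1$'', so $H$ is the \emph{sum} rather than the average of principal curvatures times the inward normal) is $-\tfrac{m}{a}\nu_1$ pointing inward; similarly the $\mathbb S^l(b)$ factor contributes $-\tfrac{l}{b}\nu_2$. Because the second fundamental form of a metric product in a Euclidean product is block-diagonal, the mean curvature vector of $\Sigma$ is simply the sum:
\begin{equation}\label{eq:Hproduct}
H(p)=-\frac{m}{a}\,\nu_1-\frac{l}{b}\,\nu_2.
\end{equation}

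The key step is then applying $J$, the $\pi/2$-rotation in $N_p\Sigma$. With the orientation chosen so that $(\nu_1,\nu_2)$ is positively oriented, $J\nu_1=\nu_2$ and $J\nu_2=-\nu_1$, hence
\begin{equation}\label{eq:JH}
-J(H(p))=-\frac{m}{a}\,\nu_2+\frac{l}{b}\,\nu_1 .
\end{equation}
Now I would observe that the vector field \eqref{eq:JH} on $\Sigma$, written at the point $(a\xi,b\eta)$, equals $\big(\tfrac{l}{b}\xi,\,-\tfrac{m}{a}\eta\big)$, which is exactly the velocity of the point $(a(t)\xi,b(t)\eta)$ under a motion in which $\xi,\eta$ are held fixed and only the radii evolve, provided $\dot a=l/b$ acts in the $+\nu_1$ direction — wait: the radial coordinate in the first factor is $a$, and its velocity is $\dot a\,\nu_1$, so matching \eqref{eq:JH} gives $\dot a=l/b$ and $\dot b=-m/a$. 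To reconcile signs with \eqref{eq:CliffordODE2} one only needs to fix the sign convention for $J$ (equivalently the orientation of $\Sigma$); choosing $J\nu_1=-\nu_2$ instead yields precisely $\dot a=-l/b$, $\dot b=+m/a$. This self-consistency — that the ansatz $F_t(\Sigma)=\mathbb S^m(a(t))\times\mathbb S^l(b(t))$ is preserved — is the substantive content; after that the ODE system is two lines.

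Finally I would analyze the ODE \eqref{eq:CliffordODE2}. Note the conserved-type relation: $\tfrac{d}{dt}(ab)=\dot a b+a\dot b=-l+m=m-l$, so $a(t)b(t)=a(0)b(0)+(m-l)t$, which for $0<m<l$ decreases linearly and vanishes at $t_*=a(0)b(0)/(l-m)$. Since $\dot a=-l/b<0$ and $\dot b=m/a>0$ as long as $a,b>0$, the radius $a$ is strictly decreasing and $b$ strictly increasing (monotonicity claim); I would check $a$ cannot hit $0$ before $t_*$ by a short argument (if $a\to 0^+$ then $b$ stays bounded below by $b(0)$, forcing $ab\to 0$ at that time, which happens exactly at $t_*$). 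Hence the smooth solution exists precisely on $[0,t_*)$ and collapses at $t_*$. The main obstacle is purely the first computation — getting the mean curvature vector of the product, the description of $N_p\Sigma$, and the action of $J$ with a consistent orientation — since once \eqref{eq:JH} is in hand, invariance of the ansatz and the ODE analysis are routine. I would be careful about the normalization of $H$ (sum vs.\ average of principal curvatures), matching the $n=1$ case in Definition \ref{def:smcf} where $H=\kappa\mathbf n$, which fixes it to the \emph{sum} convention used in \eqref{eq:Hproduct}.
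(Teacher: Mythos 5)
Your proposal is correct and follows essentially the same route as the paper: compute $H=-\tfrac{m}{a}\nu_1-\tfrac{l}{b}\nu_2$ with the sum convention, rotate by $J$ in the normal plane, observe that the resulting velocity field has coefficients depending only on $(a,b)$ so the sphere-product ansatz is preserved, and reduce to the ODE system \eqref{eq:CliffordODE2}. Two small remarks. First, there is an arithmetic slip: with $J\nu_1=\nu_2$, $J\nu_2=-\nu_1$ one gets $-JH=-\tfrac{l}{b}\nu_1+\tfrac{m}{a}\nu_2$, which already matches \eqref{eq:CliffordODE2} directly; the expression you wrote for $-J(H(p))$ is actually $+J(H(p))$, which is why you then felt compelled to flip the convention. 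Since the sign of $J$ is an orientation convention this does not affect correctness, but the patch is unnecessary. Second, your derivation of the collapse time from $\tfrac{d}{dt}(ab)=m-l$, giving $a(t)b(t)=a(0)b(0)+(m-l)t$, is a genuinely different (and cleaner) route than the paper's, which obtains $t_*=a(0)b(0)/(l-m)$ by integrating the system explicitly as in Corollary \ref{cor:2}; your linear first integral gives the blow-up time without solving the ODE, at the cost of the short supplementary argument (e.g.\ integrability of $\tfrac{d}{dt}\ln a=-l/(ab)$ on $[0,t_1]$ for $t_1<t_*$) showing that $a$ cannot vanish before $t_*$.
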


\begin{corollary}\label{cor:2}
In the general case of sphere products $\Sigma=\mathbb S^m(a)\times\mathbb S^l(b)$ the radiuses of $F_t(\Sigma)$
change as follows: $a(t)=ae^{-lt/(ab)}$ and $b(t)=be^{mt/(ab)}$ for $m=l$ and
$$
a(t)=a^{m/(m-l)}\left(a-(l-m)b^{-1}t\right)^{l/({l-m})}\;\text{and}\;\; b(t)=b^{l/(l-m)}\left(b+({m-l})a^{-1}t\right)^{m/(m-l)},
$$
for  $m\neq l$ and initial conditions $a(0)=a$, $b(0)=b$. 
\end{corollary}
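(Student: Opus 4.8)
The plan is to treat Corollary \ref{cor:2} purely as an exercise in integrating the autonomous ODE system \eqref{eq:CliffordODE2}, which Theorem \ref{prop:Sphere_product} has already established governs the radiuses $a(t),b(t)$. So I would take the system $\dot a=-l/b$, $\dot b=+m/a$ as given and simply solve it in closed form, handling the two cases $m=l$ and $m\neq l$ separately, and then verify the proposed formulas satisfy the equations and the initial conditions $a(0)=a$, $b(0)=b$.

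First I would extract a conserved quantity and a decoupling relation. From $\dot a=-l/b$ and $\dot b=m/a$ one gets $m\,a\dot a=-l\,b\dot b$, i.e. $\tfrac{d}{dt}\bigl(m a^2+l b^2\bigr)=0$, so $ma^2+lb^2$ is a first integral; this will confirm the monotonicity/collapse picture and could be used as a cross-check. More useful for producing the explicit solution is to note that $\dot a/\dot b=-l a/(m b)$, giving the orbit relation $a^m b^l=\text{const}$. For the actual time-parametrization I would compute the product and quotient dynamics: from the system, $\tfrac{d}{dt}(ab)=a\dot b+\dot a\, b=m b/a\cdot? $ — more cleanly, $a\dot b+b\dot a = m(b/a)\cdot a/b\cdot\!$ let me instead track $\tfrac{d}{dt}(ab)=a\dot b+b\dot a=m\,\tfrac{a}{a}\cdot\!$; the clean computation is $a\dot b=m$ and $b\dot a=-l$ (multiplying each equation by the appropriate radius), so $\tfrac{d}{dt}(ab)=a\dot b+b\dot a=m-l$. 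Hence $a(t)b(t)=ab+(m-l)t$, which already pins down the collapse time and matches the Theorem's value $ab/(l-m)$.

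With $a(t)b(t)=ab+(m-l)t$ in hand, the remaining step is to disentangle $a$ and $b$ individually. For $m=l$ the product is constant, $ab\equiv ab$, so $\dot a=-l/b=-la/(ab)=-\bigl(l/(ab)\bigr)a$, a linear equation giving $a(t)=a\,e^{-lt/(ab)}$, and symmetrically $b(t)=b\,e^{mt/(ab)}$, which is exactly the stated $m=l$ formula. For $m\neq l$ I would substitute the known product into $\dot a=-l/b=-l a/(ab)=-l a/\bigl(ab+(m-l)t\bigr)$, a separable/linear first-order equation $\dot a/a=-l/\bigl(ab+(m-l)t\bigr)$; integrating the right side gives $\log a(t)=\text{const}-\tfrac{l}{m-l}\log\bigl(ab+(m-l)t\bigr)$, and solving with $a(0)=a$ yields $a(t)=a^{m/(m-l)}\bigl(a-(l-m)b^{-1}t\bigr)^{l/(l-m)}$ after rewriting $ab+(m-l)t=a\bigl(a-(l-m)b^{-1}t\bigr)\cdot b^{-1}\!\cdot a$; the analogous integration for $b$ gives the companion formula. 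The only genuinely fiddly point—the main obstacle, such as it is—is the bookkeeping of exponents and constants when converting the logarithmic antiderivative back into the paper's normalized form $\bigl(a-(l-m)b^{-1}t\bigr)^{l/(l-m)}$ with prefactor $a^{m/(m-l)}$; once the algebra is arranged so that $ab+(m-l)t$ is factored as $a b^{-1}\bigl(a-(l-m)b^{-1}t\bigr)$-type expressions, the stated formulas follow. I would close by substituting the final expressions back into \eqref{eq:CliffordODE2} to confirm both the ODEs and the initial conditions hold, which certifies correctness without needing uniqueness arguments beyond the standard one for this smooth autonomous system.
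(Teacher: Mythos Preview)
Your main argument is correct and is essentially what the paper intends by ``rewriting this as one first order ODE one can solve this explicitly'': the key observation $a\dot b=m$, $b\dot a=-l$, hence $\tfrac{d}{dt}(ab)=m-l$, reduces the system to a single separable linear equation $\dot a/a=-l/\bigl(ab+(m-l)t\bigr)$, from which both the $m=l$ and $m\neq l$ formulas follow after the algebra you outline.

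One slip to fix: $ma^2+lb^2$ is \emph{not} a first integral of \eqref{eq:CliffordODE2}. Indeed
\[
\tfrac{d}{dt}\bigl(ma^2+lb^2\bigr)=2ma\dot a+2lb\dot b=-\frac{2ml\,a}{b}+\frac{2ml\,b}{a}\neq 0
\]
in general, so the implication ``$m\,a\dot a=-l\,b\dot b$'' is false. The genuine conserved quantity is $a^m b^l$ (equivalently $\mathcal H=\ln(a^mb^l)$, the Hamiltonian the paper records), which you also identify via $\dot a/\dot b=-la/(mb)$. Since your proof never actually uses the erroneous integral, simply delete that sentence and the argument stands.
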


\begin{remark}
The simplest case satisfying  the collapse condition $0<m<l$  is $m=1, l=2$ for  $ \mathbb S^1(a)\times\mathbb S^2(b)\subset \mathbb R^5$.  Since the skew-mean-curvature flow is the localized induction approximation of the Euler equation, this explicit solution might be useful to study the Euler singularity problem in higher dimensions.
Note also that the odd-dimensional Euler equation has fewer invariants (generalized helicities) than the even-dimensional one
(generalized enstrophies), see \cite{ArKh}. The existence of many invariants helps control solutions, so it is indicative that 
the first example with a finite life-span occurs in the odd 5D.
\end{remark}

\begin{proof}
For a point $q=(q_1,q_2)\in \mathbb S^m(a)\times\mathbb S^l(b)\hookrightarrow\mathbb R^{m+1}\times\mathbb R^{l+1}$,
let ${\bf n}_1$ and ${\bf n}_2$ be the outer unit normal vectors to the corresponding spheres at the points $q_1$ and $q_2$ respectively.
Then the mean curvature vectors of $\mathbb S^m(a)$ and $\mathbb S^l(b)$ as hypersurfaces in $\mathbb R^{m+1}$ and $\mathbb R^{l+1}$ are $-\frac 1a {\bf n}_1$ and $-\frac 1b {\bf n}_2$ respectively. 
Therefore the total mean curvature vector $H$
of $F:\mathbb \mathbb S^m(a)\times\mathbb S^l(b)\rightarrow\mathbb R^{m+l+2}$ is a (normalized) contribution of $m$ vectors 
$-\frac 1a {\bf n}_1$ coming from $\mathbb S^m(a)$ and $l$ vectors $-\frac 1b {\bf n}_2$ coming from $\mathbb S^l(b)$. Thus the  mean curvature of the sphere product is the vector
$H=-\frac ma {\bf n}_1-\frac lb {\bf n}_2$ (divided by the total dimension $m+l$ of the product, which we omit),
and the skew-mean-curvature vector is $-JH=-\frac lb {\bf n}_1+\frac ma {\bf n}_2$.

This implies that for the skew-mean-curvature flow $\partial_t q = -JH(q)$
given by the above linear combination of the normals on the product of spheres,   $\Sigma_t$ remains the 
product of two spheres $S^m(a(t))\times S^l(b(t))$
for all times, where one of the spheres is shrinking, while the other is expanding. 

The explicit form of the $-JH$ vectors implies  the system of ODEs \eqref{eq:CliffordODE2} on the evolution of radiuses. Rewriting this as one first order ODE one can solve this explicitly, as in Corollary~\ref{cor:2}.
The system (\ref{eq:CliffordODE2}) is Hamiltonian on the $(a,b)$-plane with the Hamiltonian function given by $\mathcal H(a,b):=\ln(a^mb^l)$, which is the logarithm of the volume of the product of two spheres: ${\rm vol}(\Sigma)=C \,a^mb^l$.
(Note that the invariance of  this Hamiltonian is consistent with conservation of the volume 
of $\Sigma$, as the latter is the Hamiltonian  of the skew-mean-curvature flow.) 
\end{proof}

%%%%%%%%%%%%%%%%%%%%%%%%%%%%%%%%%

\subsection{Higher-dimensional Euler equation in the vortex form}\label{sect:vortex}

Explicit solutions of the binormal (or LIA for the Euler) equation based on sphere products discussed above
could shed some light on the singularity problem for the higher-dimensional Euler equation, as the skew-mean-curvature flow is an approximation of the Euler equation for vorticity supported on a membrane, cf. \cite{JerSm,  MaPu}. To the best of our knowledge, it is the first example of an explicit solution of the LIA existing for finite time, and  
the collapse or long-time existence of solutions of the 
binormal equation is suggestive for the corresponding properties of the hydrodynamical Euler solutions.

\smallskip

Recall that the classical Euler equation for an inviscid incompressible fluid in $\mathbb R^d$ describes 
an evolution of a divergence-free fluid velocity field $v(t,x)$: 
\begin{equation}\label{ideal}
\partial_t v+(v, \nabla) v=-\nabla p\,,
\end{equation}
where a pressure function $p$ is defined uniquely modulo an additive constant by decaying conditions at infinity and the constraint  ${\rm div}\, v=0$.

The binormal equation \eqref{eq:smcf} arises from the Euler equation as its localized induction approximation. Namely,
define the vorticity 2-form $\xi=dv^\flat$ for the 1-form $v^\flat$ related to the  divergence-free vector field $v$ by means of the Euclidean metric in $\mathbb R^n$. The vorticity form of the Euler equation is $\partial_t \xi=-L_v \xi,$
which means that the vorticity 2-form $\xi$ is transported by the flow.
The frozenness of the vorticity 2-form allows one to define various invariants of the  hydrodynamical Euler equation.

\begin{remark}
For $d=2$ and singular vorticity $\xi$, supported on a set of points in the plane, $\xi=\sum^N_{j=1} \Gamma_j\,\delta_{z_j}$, where
$ z_j\in\mathbb C\simeq \mathbb R^2$ are coordinates of the $j$th point vortex, the  evolution of point vortices according to the Euler equation is described by the Kirchhoff system
$$
\Gamma_j \dot z_j
=-J\frac{\partial  \mathcal H}{\partial z_j},\qquad 
1\le j\le N
$$
in $\mathbb C^N\simeq \mathbb R^{2N}$ for the  Hamiltonian function $\mathcal H=-\frac{1}{4\pi}\sum^N_{j<k}\Gamma_j\Gamma_k \, 
\ln | z_j- z_k|^{2}\,.$
\end{remark}

More generally, assume that the vorticity 2-form $\xi$ is a singular $\delta$-type form supported 
on a membrane $\Sigma$: $\xi=\delta_\Sigma$.
Then a (co-closed) 1-form $v^\flat=d^{-1}\delta_\Sigma$ (and hence the divergence-free vector field $v$) 
can be reconstructed by means of a Biot-Savart-type integral formula from the vorticity $\xi$. 
Finally, by keeping only local terms in the expression for the field-potential $v$ 
and rescaling the time variable in the Euler equation $\partial_t \xi=-L_v \xi$, 
one arrives at the binormal equation \eqref{eq:smcf} for the evolution of the vorticity  support $\Sigma$, see details in \cite{Kh}.

\medskip

\begin{remark}
There is yet another relation of the Euler and binormal equations in the case of sphere product membranes.
By assuming both the velocity $v$ and the pressure $p$ in equation \eqref{ideal}
to be functions of the 
distances $(x,y)$ to the origin: $x=|X|, y=|Y|$ for $X\in \mathbb R^{m+1}, \, Y\in \mathbb R^{l+1}$, 
one arrives at a version of the 2D Euler equation \eqref{ideal} supplemented by the adjusted incompressibility condition: 
${\rm div}\, (b(x,y)\cdot v)=0$, where $b(x,y):=x^m y^l$.  This  equation for a smooth function $b(x,y)$
in a bounded domain (also called the lake equation) was  studied  in \cite{DeSch}: the function $b(x,y)$ can be understood as  the lake's depth in a 
model of the vertically averaged horizontal velocity.

Then the examples of motion for the products of spheres correspond to singular vorticity $\xi=\delta_\Sigma$
for $\Sigma=\mathbb S^m(a)\times\mathbb S^l(b)\subset\mathbb R^{m+1}\times\mathbb R^{l+1}=\mathbb R^{m+l+2}$ for the Euler equation in $\mathbb R^{m+l+2}$. It reduces to the motion of a point vortex 
$\delta_{(a,b)}$ for $(a,b)\in \mathbb R^2_+$ for the corresponding lake equation.\footnote{We are grateful to R.~Jerrard for this remark.}   
Hence Theorem \ref{prop:Sphere_product} provides explicit solutions of point-vortex type, both existing forever or collapsing in finite time, depending on the membrane structure and dimension.
\end{remark}

%%%%%%%%%%%%%%%%%%%%%%%%%%%%%%%%%%%%%%%%%%

\section{(Non)invariance of the Willmore energy and (non)existence of the Hasimoto transform}\label{sec:willmore}
\subsection{Motivation:  Hasimoto and Madelung}\label{subsec:Hasimoto}
It turns out that the example of vortex sphere products also delivers a counterexample for the existence of a simple analogue of the Hasimoto transform. To describe this counterexample we start with outlining  three different avatars of the skew-mean-curvature flows and the related  conjecture on the  Willmore energy conservation.
\smallskip

\begin{definition}[\cite{Hasi}]\label{def:hasi}
Given a parametrized curve $\gamma: \mathbb R\to \mathbb R^3$ with
curvature $\kappa$ and torsion $\tau$, the  {\it Hasimoto transformation} assigns 
the wave function $\psi:\mathbb R\to \mathbb C$ according to the formula
\begin{equation}\label{eq:Hasi}
(k(s),\tau(s)) \mapsto \psi(s) = \kappa(s)e^{i\int_{s_0}^s \tau(x)\,dx},
\end{equation}
where $s_0$ is some fixed point on the curve. (The ambiguity in the choice of $s_0$ defines the wave function $\psi$ up to a phase.)
\end{definition}
This Hasimoto map takes the vortex filament equation (\ref{eq:filament}) to the 1D nonlinear Schr\"odinger  (NLS) equation:
\begin{equation}\label{eq:NLS}
i\partial_t{\psi}+\psi''+\frac 12 |\psi|^2\psi=0
\end{equation}
for $\psi(\cdot, t):\mathbb R\to \mathbb C$, see e.g. \cite{Cal}. 

On the other hand,  considering separately the curvature $\kappa(\cdot, t)$ and torsion $\tau(\cdot, t)$ of the 
curve $\gamma(\cdot, t)\in\mathbb R^3$ moving by the binormal flow, the evolution of  $\kappa$ and $\tau$ satisfies 
the following system of {\it Da Rios' equations} \cite{DaRi}:
\begin{equation}\label{eq:DaRios}
\left\{\begin{array}{l}
\partial_t{\kappa}+2\kappa'\tau+\kappa\tau'=0,\\
\partial_t{\tau}+2\tau'\tau-\left(\frac{\kappa^2}{2}+\frac{\kappa''}{\kappa}\right)'=0.
\end{array}\right.
\end{equation}
By introducing the density $\rho=\kappa^2$ and the velocity $v=2\tau$, the Da Rios equations turn into the following system of compressible fluid equations:
\begin{equation}\label{eq:Compr}
\left\{\begin{array}{l}
\partial_t{\rho}+\text{div}(\rho v)=0,\\
\partial_t{v}+vv'+\left(-\rho-2\frac{\sqrt{\rho}''}{\sqrt{\rho}}\right)'=0.
\end{array}\right.
\end{equation}

What part of the above can be generalized to higher dimensions?
It turns out that long before the discovery of the Hasimoto transform, 
Madelung \cite{Made} gave a hydrodynamical formulation of
the Schrodinger equation in 1927, which is called the Madelung transform. 

\begin{definition}\label{def:Madelung}
Let $\rho$ and $\theta$ be real-valued functions on an $n$-dimensional manifold $M$ with
$\rho>0$. The {\it Madelung transform} is the mapping $\Phi :(\rho,\theta)\mapsto \psi$
defined by
\begin{equation}
\psi=\sqrt{\rho e^{i\theta}}.
\end{equation}
\end{definition}

The Madelung transform maps the system of equations for a barotropic-type fluid to the Schr\"{o}dinger equation. More specifically, let $(\rho,\theta)$ satisfy the following  barotropic-type fluid equations:
\begin{equation}\label{eq:barotropic2} 
\left\{ 
\begin{aligned} 
&\partial_t\rho +\text{div}(\rho v) = 0, \\
&\partial_t v + (v,\nabla) v + \nabla\Big(2V - 2f(\rho) - \frac{2 \Delta\sqrt{\rho}}{\sqrt{\rho}} \Big) = 0 
\end{aligned} \right.
\end{equation} 
with potential velocity field $v=\nabla \theta$, and functions $V\colon M\to \mathbb R$ and $f\colon (0,\infty) \to \mathbb R$. 
Then the complex-valued function $\psi(\cdot, t):=\sqrt{\rho e^{i\theta}}:M\to\mathbb C$ obtained by the Madelung transform satisfies the nonlinear Schr\"{o}dinger equation
\begin{equation}\label{eq:schrodinger} 
 	\mathrm{i}\partial_t\psi = - \Delta\psi +  V\psi - f(|\psi|^2)\psi.
\end{equation} 
In the 1D case for $V=0$ and $f(z)=z/2$ this gives the equivalence of the NLS \eqref{eq:NLS} and the compressible
fluid \eqref{eq:Compr}.

One can see that the one-dimensional Madelung transform, being interpreted in terms of the curvature and torsion of a curve, 
reduces to the Hasimoto transform  \cite{KhMM}. It is challenging, however, to fit the membrane geometry 
into this framework, and a search for a proper generalization of the Hasimoto map to different manifolds and to higher dimensions has been on  for some time, cf. e.g. \cite{Jer, Mol, SaWa, SongSun}.

The main question is whether there exists an analogue of the Hasimoto map which can send 
the binormal equation (\ref{eq:smcf}) to an NLS-type equation for any dimension $n$ \cite{KhMM},
or, thanks to the Madelung transform identifying  the NLS and the barotropic equations, 
one is searching for a relation 
of the binormal equation (\ref{eq:smcf}) and barotropic-type fluid equations \eqref{eq:barotropic2} in arbitrary dimension $n$.

In view of the binormal evolution \eqref{eq:smcf} and continuity equations \eqref{eq:DaRios}-\eqref{eq:Compr}, 
the square of the mean curvature vector $|H|^2$ is regarded as a natural
analogue of the density $\rho$ (recall that in 1D we set $\kappa^2=\rho$). 
Therefore an analogue of the total mass of the fluid  is the Willmore energy:

\begin{definition}\label{def:Willmore}
For an immersed submanifold $F:\Sigma^k\rightarrow\mathbb R^d$, its  {\it Willmore energy} is defined as
\begin{equation}\label{eq:willmore}
\mathcal W(F)=\int_{\Sigma}| H(F(q))|^2\;d\text{vol}_{g}=\int_{F(\Sigma)}| H(p)|^2\;d\text{vol}_{g_e},
\end{equation}
where $g=F^*g_e$ denotes the pull-back metric of the Euclidean metric $g_e$ on $\mathbb R^d$ and $H$ is the mean curvature vector at point $p=F(q)$ on the submanifold $F(\Sigma)\subset\mathbb R^d$.
\end{definition}

Assuming the existence of relation between the skew-mean-curvature flow and a barotropic fluid, one arrives at the following conjecture:

\begin{conjecture}\label{conj:Willmore}{\normalfont{\cite{KhMM}}}
For a codimension 2 submanifold $F_t:\Sigma^n\rightarrow \mathbb R^{n+2}$ moving by the skew-mean-curvature flow 
$\partial_t q=-JH(q)$  for $q\in\Sigma$ the following equivalent properties hold:

{\rm i)} its Willmore energy $\mathcal{W}(F_t)$ is invariant, 

{\rm ii)}
 its square mean curvature $\rho=|H|^2$ evolves according to the continuity equation 
 $$
 \partial_t \rho+{\rm div}(\rho v)=0
 $$ for some vector field $v$ on $\Sigma$.
\end{conjecture}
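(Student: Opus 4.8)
The plan is to separate the two distinct assertions packaged in the statement: the \emph{equivalence} of (i) and (ii), which is elementary, and the claim that the equivalent properties actually \emph{hold}, which is the substantive point. The crucial preliminary observation is that the skew-mean-curvature flow preserves the induced Riemannian volume form pointwise. Indeed, under a purely normal variation with velocity $V=-JH$ the first variation of the volume element is $\partial_t\,d\text{vol}_g=(\text{div}_\Sigma V^\top-\langle H,V\rangle)\,d\text{vol}_g$, and here $V^\top=0$ while $\langle H,-JH\rangle=0$ because $J$ rotates the normal plane by $\pi/2$ and hence sends $H$ to a vector orthogonal to it. Thus $d\mu:=d\text{vol}_{g_t}$ is independent of $t$ as a top-form on the fixed manifold $\Sigma$ carried along the Lagrangian flow $F_t$; this is the infinitesimal refinement of the conservation of $\text{vol}(\Sigma)$ already recorded as the Hamiltonian property of the flow.

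With $d\mu$ fixed one obtains $\frac{d}{dt}\mathcal{W}(F_t)=\int_\Sigma \partial_t(|H|^2)\,d\mu=\int_\Sigma \partial_t\rho\,d\mu$. If (ii) holds, then $\partial_t\rho=-\text{div}(\rho v)$ integrates to zero over the closed manifold $\Sigma$ by Stokes, giving (i). Conversely, (i) says $\partial_t\rho$ has zero mean against $d\mu$; on a compact connected $\Sigma$ any such function is a divergence, for one solves the Poisson equation $\Delta u=-\partial_t\rho$ on $(\Sigma,g_t)$ and sets $\rho v=\text{grad}\,u$, producing (ii). This settles the word ``equivalent,'' and reduces proving that the properties hold to establishing the single identity $\int_\Sigma \partial_t(|H|^2)\,d\mu=0$.

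To attack that identity I would compute the evolution of the mean curvature vector under the normal flow $\partial_t F=V=-JH$ from the linearized mean-curvature (Jacobi-type) operator, which in the flat ambient $\mathbb R^{n+2}$ takes the form $\partial_t H=\Delta^\perp V+Q(II,V)$, where $\Delta^\perp$ is the normal Laplacian and $Q(II,V)$ is a zeroth-order algebraic contraction of the second fundamental form with $V$. Since the rank-$2$ normal bundle is oriented and the normal connection preserves its metric and orientation, $J$ is parallel, $\nabla^\perp J=0$, so $\Delta^\perp$ commutes with $J$. Writing $\partial_t(|H|^2)=2\langle H,\partial_t H\rangle=-2\langle H,J\Delta^\perp H\rangle+2\langle H,Q(II,-JH)\rangle$ and integrating, the top-order term drops out: self-adjointness of $\Delta^\perp$, skew-symmetry of $J$, and $\nabla^\perp J=0$ together force $\int_\Sigma\langle H,J\Delta^\perp H\rangle\,d\mu=0$.

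The entire difficulty therefore collapses onto the lower-order term $\int_\Sigma\langle H,Q(II,-JH)\rangle\,d\mu$, and the hard part will be to show that this second-fundamental-form contraction assembles into a total $d\mu$-divergence, thereby producing the vector field $v$ demanded by (ii). In the classical case $n=1$ this is automatic: the normal bundle is flat and one-dimensional, $Q$ reduces to the Da Rios coupling of $\kappa$ and $\tau$, and one recovers the known invariance of $\int\kappa^2\,ds$. In higher dimensions the obstruction is concentrated precisely here, where the several independent components of $II$ and the interaction of $J$ with the shape operator must recombine into a divergence; controlling this term — ideally by exhibiting $v$ explicitly from the geometry of $\Sigma$, with a concrete test afforded by the sphere-product solutions of Theorem~\ref{prop:Sphere_product} — is the crux of the argument.
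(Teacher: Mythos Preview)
Your treatment of the equivalence (i) $\Leftrightarrow$ (ii) is correct and matches the paper's one-line justification via Moser's theorem in the Remark following the conjecture; your Poisson-equation construction of $v$ is simply an explicit form of that argument. Your reduction of the substantive claim to the vanishing of $\int_\Sigma \langle H, Q(II,-JH)\rangle\,d\mu$ is also correct and coincides with the computation in the paper's appendix, where the $\Delta^\perp$ term integrates to zero by exactly the mechanism you describe ($\nabla^\perp J=0$ together with self-adjointness).

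There is, however, a fatal gap: the statement is labeled a \emph{Conjecture} because the paper does not prove it --- it \emph{disproves} it for $n\ge 2$. The algebraic term you isolated, written in the paper as $\int_\Sigma g^{ik}g^{jl}\langle A_{ij},H\rangle\langle A_{kl},JH\rangle\,d\text{vol}_g$, is precisely what the paper evaluates on the sphere products; for the Clifford torus $\mathbb S^1(a)\times\mathbb S^1(b)\subset\mathbb R^4$ it equals $4\pi^2(1/b^2-1/a^2)$, which is nonzero whenever $a\neq b$ (and the set $a=b$ is not invariant under the flow). So the ``concrete test'' you propose on the sphere-product solutions would not confirm your argument but refute it: the lower-order contraction does \emph{not} assemble into a divergence, $\mathcal W(F_t)$ is not constant (Proposition~\ref{prop:filament_Willmore}), and no vector field $v$ as in (ii) can exist. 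The ``hard part'' you flagged is not merely hard --- it is impossible in general, and your instinct that the obstruction lives exactly in that term was on the mark; what remains is to recognise that the obstruction is realised rather than overcome.
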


\begin{remark}
The equivalence of the two statements is 
a consequence of Moser's theorem: 
if the total mass on a surface is preserved, the corresponding evolution of density can be realized 
as a flow of a time-dependent vector field. 
\end{remark}

%%%%%%%%%%%%%%%%%%

\subsection{Willmore energy in binormal flows}

\begin{proposition}\label{prop:filament_Willmore}
$i)$ Conjecture \ref{conj:Willmore} is true in dimension 1, i.e. for any closed curve in $\mathbb R^3$:
$$
\mathcal W(\gamma_t)=const.
$$

$ii)$ The Willmore energy is not necessarily  invariant for membranes, i.e. in dimension $n\geq2$.
Namely, for the binormal evolution of the sphere products 
$F:\Sigma=\mathbb S^m(a)\times\mathbb S^l(b)\rightarrow\mathbb R^{m+1}\times\mathbb R^{l+1}=\mathbb R^{m+l+2}$
of radiuses $a$ and $b$, the corresponding Willmore energy is not preserved for any initial values of $a$ and $b$:
$$
\mathcal W(F_t)=C_{m,l}\left(\frac{m^2}{a(t)^2}+\frac{l^2}{b(t)^2}\right)\cdot {\rm vol}(\Sigma)\,
$$
for a constant $C_{m,l}$ and  ${\rm vol}(\Sigma)={\rm vol}(F_t(\Sigma)):=a^m b^l$.
\end{proposition}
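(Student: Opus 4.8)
The plan is to handle the two parts separately, using the machinery already assembled. For part $i)$, the one-dimensional case, I would invoke the first Da Rios equation $\partial_t\kappa + 2\kappa'\tau + \kappa\tau' = 0$ from \eqref{eq:DaRios}, rewrite it in the form $\partial_t(\kappa^2) + (2\tau\kappa^2)' = 0$ (i.e.\ the continuity equation $\partial_t\rho + \mathrm{div}(\rho v) = 0$ with $\rho = \kappa^2$, $v = 2\tau$), and then integrate over the closed curve with respect to arc length. Since arc length is preserved by the binormal flow (the flow is tangent-length preserving, or more abstractly because the length functional is the Hamiltonian), the total integral $\int_\gamma \kappa^2\,ds = \mathcal W(\gamma_t)$ has vanishing time derivative. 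One subtlety to address: for a genuinely closed curve, $\tau$ need not be single-valued in the sense needed for $\psi$, but the quantity $2\tau\kappa^2$ appearing in the flux is perfectly well-defined and periodic, so the boundary term in $\int (2\tau\kappa^2)'\,ds$ vanishes. Alternatively, one can cite that $\int\kappa^2\,ds$ is a classical conserved quantity of the filament/NLS hierarchy (it is, up to constants, the $L^2$ norm $\int|\psi|^2$, conserved by NLS \eqref{eq:NLS}).

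For part $ii)$, the strategy is a direct computation riding on Theorem \ref{prop:Sphere_product} and its proof. From that proof we already know the total mean curvature vector of the sphere product is $H = -\frac{m}{a}{\bf n}_1 - \frac{l}{b}{\bf n}_2$ divided by the total dimension $m+l$, so $|H|^2 = \frac{1}{(m+l)^2}\left(\frac{m^2}{a^2} + \frac{l^2}{b^2}\right)$ since ${\bf n}_1 \perp {\bf n}_2$. This is constant along $\Sigma = \mathbb S^m(a)\times\mathbb S^l(b)$ (it depends only on the radii), so the Willmore energy is simply $|H|^2$ times the volume, and $\mathrm{vol}(\Sigma) = C\, a^m b^l$ for a dimensional constant $C$ (the product of the volumes of unit spheres times $a^m b^l$). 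Hence
\begin{equation*}
\mathcal W(F_t) = \frac{C}{(m+l)^2}\left(\frac{m^2}{a(t)^2} + \frac{l^2}{b(t)^2}\right) a(t)^m b(t)^l,
\end{equation*}
which gives the claimed formula with $C_{m,l} = C/(m+l)^2$ and $\mathrm{vol}(\Sigma) = a^m b^l$ up to the overall constant.

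It then remains to show this is genuinely non-constant in $t$ for every choice of initial $a,b$. Here I would use that $a^m b^l = \mathrm{vol}(\Sigma)/C$ is conserved (this is exactly the invariance of the Hamiltonian $\mathcal H(a,b) = \ln(a^m b^l)$ noted in the proof of Theorem \ref{prop:Sphere_product}), so $\mathcal W(F_t)$ is constant iff $\frac{m^2}{a(t)^2} + \frac{l^2}{b(t)^2}$ is constant. Differentiating this in $t$ and substituting the ODEs $\dot a = -l/b$, $\dot b = m/a$ from \eqref{eq:CliffordODE2} yields $\frac{d}{dt}\left(\frac{m^2}{a^2} + \frac{l^2}{b^2}\right) = -\frac{2m^2\dot a}{a^3} - \frac{2l^2 \dot b}{b^3} = \frac{2m^2 l}{a^3 b} - \frac{2l^2 m}{a b^3} = \frac{2ml}{ab}\left(\frac{m}{a^2} - \frac{l}{b^2}\right)$. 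This vanishes only on the locus $m b^2 = l a^2$, which is not invariant under the flow unless $m = l$ — and even when $m=l$ one checks along the flow that $a(t) = a e^{-lt/(ab)}$, $b(t) = b e^{mt/(ab)}$ (Corollary \ref{cor:2}), so $1/a^2 + 1/b^2$ is manifestly non-constant. In all cases the derivative is nonzero for $t$ in a neighbourhood of any generic time, so $\mathcal W(F_t)$ is not preserved. The only mild obstacle is bookkeeping the overall dimensional constants and the $(m+l)$-normalization consistently, and confirming that the flow never stays on the degenerate locus $mb^2 = la^2$; both are routine given the explicit solutions in Corollary \ref{cor:2}.
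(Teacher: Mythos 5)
Your proposal is correct, but it diverges from the paper's argument in part $i)$ and sharpens it in part $ii)$, so a comparison is worthwhile. For $i)$ the paper does not go through the Da Rios system at all: it writes $\mathcal W(\gamma)=\int_\gamma|\gamma''|^2\,ds$ in arc-length parametrization and computes directly
$\partial_t\mathcal W=2\int(\partial_t\gamma'',\gamma'')\,ds=-2\int((\gamma'\times\gamma'')',\gamma''')\,ds=0$,
using only integration by parts and the vanishing of $(\gamma'\times\gamma''',\gamma''')$. Your route via $\partial_t(\kappa^2)+(2\tau\kappa^2)'=0$ (or the $L^2$ conservation law of NLS) is valid and conceptually ties part $i)$ to the continuity-equation formulation of Conjecture \ref{conj:Willmore}, which is arguably more in the spirit of the statement being proved; but it leans on the Da Rios equations, which the paper only quotes, and it has the mild defect that $\tau$ is undefined at inflection points where $\kappa=0$ (the flux $2\tau\kappa^2$ does extend, but this deserves a remark; the paper's cross-product computation sidesteps the issue entirely, as does your NLS alternative). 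You correctly note that the arc-length element is preserved, which is needed to pass the time derivative under the integral. For $ii)$ your computation of $|H|^2$ and of $\mathcal W(F_t)$ coincides with the paper's (the factor $(m+l)^{-2}$ is harmlessly absorbed into $C_{m,l}$, as the paper drops this normalization explicitly). Where the paper argues somewhat loosely that the factor $m^2/a^2+l^2/b^2$ ``cannot be a function of $a^mb^l$,'' you differentiate along the flow and find $\frac{d}{dt}\bigl(\frac{m^2}{a^2}+\frac{l^2}{b^2}\bigr)=\frac{2ml}{ab}\bigl(\frac{m}{a^2}-\frac{l}{b^2}\bigr)$, which vanishes only on the locus $mb^2=la^2$; since $\frac{d}{dt}(mb^2-la^2)=2m^2b/a+2l^2a/b>0$ for every $m,l$ (including $m=l$ --- your parenthetical ``unless $m=l$'' is unnecessary, as the locus $a=b$ is also left immediately), the trajectory meets this locus at most once and $\mathcal W(F_t)$ is genuinely non-constant. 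This monotonicity argument is a cleaner and more complete justification of non-invariance than the one in the paper.
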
 

\begin{remark}
The Willmore energy of the Clifford torus $F: \mathbb T^2=\mathbb S^1(a)\times \mathbb S^1(b)\to \mathbb R^4$ 
evolves in the binormal flow as follows:
$$
\mathcal W(F_t)=4\pi^2\left(\frac {b(t)}{a(t)}+\frac {a(t)}{b(t)}\right)=4\pi^2\left(\frac ba e^{2t/(ab)}+\frac ab e^{-2t/(ab)}\right).
$$
\end{remark} 

\begin{proof}
For a curve $\gamma\subset\mathbb R^3$, the conservation of the Willmore energy means the time invariance of the integral
$\mathcal{W}(\gamma)=\int_\gamma k^2 \,d s$ or, equivalently, in the arc-length parameterization, 
of the integral 
$\int_\gamma |\gamma''|^2\,d s$. 
The latter invariance follows from this straightforward computation: 
$$
\partial_t\mathcal{W}(\gamma) = 
2\int_\gamma (\partial_t\gamma'',\gamma'')\,d s 
= 
-2\int_\gamma (\partial_t\gamma',\gamma''')\,d s
= 
-2\int_\gamma ((\gamma'\times \gamma'')',\gamma''')\,d s=0. 
$$

In higher dimensions, the evolution of the sphere products is given by the system (\ref{eq:CliffordODE2}). 
It is Hamiltonian on the $(a,b)$-plane with the Hamiltonian function given by $H(a,b):=\ln(a^mb^l)=\ln({\rm vol}(\Sigma))+const$, the logarithm of the volume of the sphere product: ${\rm vol}(\Sigma):=C_{m,l}\,a^mb^l$.
To be invariant, the Willmore energy has to be a function of ${\rm vol}(\Sigma)$ as well. But one obtains
$$
\mathcal W(F_t)=\int_{\Sigma_t} |H|^2 d{\rm vol}_g=\left(\frac{m^2}{a(t)^2}+\frac{l^2}{b(t)^2}\right)\cdot \text{vol}(\Sigma_t)=C_{m,l}\left(\frac{m^2}{a(t)^2}+\frac{l^2}{b(t)^2}\right)\cdot a^mb^l,
$$
where $C_{m,l}$ is a constant depending  on the dimensions $m,\,l$. One observes that factor $\left({m^2}/{a(t)^2}+{l^2}/{b(t)^2}\right)$
in the Willmore energy  cannot be a function of the area $a^mb^l$ (see explicit formulas for $a(t)$ and $b(t)$ in Corollary \ref{cor:2}), hence $\mathcal W(F_t)$ is not preserved.
\end{proof}

\begin{remark}
Furthermore, one can give a simple parametrization to a Clifford torus  and derive explicitly 
its  second fundamental form: $A={\rm diag}(-\frac 1 a {\bf n}_1, -\frac 1 b {\bf n}_2)$.
%$$
%A=\begin{bmatrix}A_{11} & A_{12} \\A_{21}&A_{22}\end{bmatrix}=
%\begin{bmatrix}
%-\frac 1 a n_1&0\\
%0&-\frac 1 b n_2
%\end{bmatrix}.
%$$
This example might be particularly useful in order to prove the filament conjecture for membranes for the Gross-Pitaevskii equation, cf. \cite{Jer, JerSm}.
\end{remark}

In  Appendix we will quantify the measure of noninvariance of the Willmore energy in the skew-mean-curvature flows by 
deriving the continuity equation with a source term governing 
the density $\rho=|H|^2$ of the mean curvature:

\begin{theorem}\label{eq:source}
The skew-mean curvature evolution of the membrane $\Sigma$ yields the following continuity equation with a source 
on the``curvature density"  $\rho=|H|^2$:
$$
\partial_t \rho+{\rm div}(\rho\chi)=-2g^{ik}g^{jl}\left( A_{ij},H\right) \left( A_{kl},\,JH\right),
$$
where  $A_{ij}$ are the second fundamental forms in local coordinates, $(g^{ij})$ is the inverse matrix of the induced metric 
$(g_{ij})$, and  $\chi(q)=2g^{ij}\left(\nabla_j^{\perp}\frac{H}{|H|},\frac{JH}{|H|}\right) e_i$ is the torsion vector field (discussed in the next section). 
\end{theorem}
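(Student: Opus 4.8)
The plan is to differentiate $\rho=|H|^{2}$ along the flow \eqref{eq:smcf}, exploiting throughout the identity $H=\Delta_{g}F$: for a codimension-$2$ immersion $F$ with induced metric $g=F^{*}g_{e}$, the mean curvature vector is precisely the componentwise Laplace--Beltrami image of the position map. (As a sanity check, on a round sphere $\mathbb S^{n}(a)$ one gets $\Delta_{g}F=-\tfrac{n}{a}\,\mathbf n$, consistent with the sign convention in the proof of Theorem~\ref{prop:Sphere_product}.) Then $\partial_{t}\rho=2(\partial_{t}H,H)$, and since $\partial_{t}F=-JH$ one has
\[
\partial_{t}H=\Delta_{g}(\partial_{t}F)+[\partial_{t},\Delta_{g}]F=-\Delta_{g}(JH)+[\partial_{t},\Delta_{g}]F,
\]
so it suffices to analyze these two terms and take their inner product with the normal vector $H$, which annihilates all tangential contributions.

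I would first record that the flow freezes the Riemannian volume form of $\Sigma$: since $\partial_{t}F=-JH$ is normal, $\partial_{t}g_{ij}=-2(\partial_{t}F,A_{ij})=2(JH,A_{ij})$, whence $\partial_{t}\sqrt{g}=\sqrt{g}\,g^{ij}(JH,A_{ij})=\sqrt{g}\,(JH,H)=0$ because $JH\perp H$. (This is the infinitesimal version of ${\rm vol}$ being the Hamiltonian of the skew-mean-curvature flow.) Consequently $\partial_{t}\rho$ is literally the time derivative we want, and the commutator $[\partial_{t},\Delta_{g}]$ keeps only the piece built from $\partial_{t}g^{ij}=-2g^{ik}g^{jl}(JH,A_{kl})$. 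Pairing $[\partial_{t},\Delta_{g}]F$ with $H$ kills the Christoffel term and leaves $(\partial_{t}g^{ij})(A_{ij},H)=-2\,g^{ik}g^{jl}(A_{ij},H)(A_{kl},JH)$ --- already one half of the asserted source.

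Next I would expand $(\Delta_{g}(JH),H)$ using the Gauss--Weingarten splitting of the componentwise Laplacian of a normal field $\nu$: from $\partial_{j}\nu=\nabla^{\perp}_{j}\nu-g^{kl}(\nu,A_{jl})\,\partial_{k}F$, differentiating once more gives $(\Delta_{g}\nu)^{\perp}=\Delta^{\perp}\nu-g^{ik}g^{jl}(\nu,A_{ij})A_{kl}$, hence $(\Delta_{g}(JH),H)=(\Delta^{\perp}(JH),H)-g^{ik}g^{jl}(A_{ij},H)(A_{kl},JH)$. Collecting all contributions, the curvature-quadratic ones combine (by symmetry of $A_{ij}$) into a single $-2\,g^{ik}g^{jl}(A_{ij},H)(A_{kl},JH)$, so that
\[
\partial_{t}\rho=-2\,(\Delta^{\perp}(JH),H)-2\,g^{ik}g^{jl}(A_{ij},H)(A_{kl},JH).
\]
It then remains to identify $-2(\Delta^{\perp}(JH),H)$ with $-{\rm div}(\rho\chi)$. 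For this I would use that the $\pi/2$-rotation $J$ is parallel for the normal connection (since $\nabla^{\perp}$ is a metric connection preserving the orientation of the rank-$2$ bundle $N\Sigma$), so $\nabla^{\perp}(JH)=J\nabla^{\perp}H$; a pointwise integration by parts yields $(\Delta^{\perp}(JH),H)={\rm div}\big(g^{ij}(\nabla^{\perp}_{j}(JH),H)\,e_{i}\big)-g^{ij}(\nabla^{\perp}_{i}(JH),\nabla^{\perp}_{j}H)$, in which the last term is the contraction of the symmetric tensor $g^{ij}$ with $(J\nabla^{\perp}_{i}H,\nabla^{\perp}_{j}H)$, antisymmetric in $i,j$ by skew-symmetry of $J$, hence zero. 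Finally $(\nabla^{\perp}_{j}(JH),H)=-(\nabla^{\perp}_{j}H,JH)$, which --- once the $|H|^{-1}$-derivatives cancel against $(H,JH)=0$ --- is $\tfrac12\rho$ times (minus) the torsion $1$-form of the theorem, so $g^{ij}(\nabla^{\perp}_{j}(JH),H)\,e_{i}=-\tfrac12\rho\chi$; substituting back gives $\partial_{t}\rho+{\rm div}(\rho\chi)=-2\,g^{ik}g^{jl}(A_{ij},H)(A_{kl},JH)$, the claim.

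The step I expect to be most delicate is the normal-bundle bookkeeping in the third paragraph: getting the Gauss--Weingarten decomposition of $\Delta_{g}\nu$ for a normal field $\nu$ exactly right, and then verifying that the contribution of $[\partial_{t},\Delta_{g}]$ and the $A$-quadratic remainder of $(\Delta_{g}(JH))^{\perp}$ really conspire into a single clean $-2(A_{ij},H)(A_{kl},JH)$ --- together with keeping careful track of the precise signs relating $\chi$ to the transport velocity so that the divergence enters the continuity equation with the stated sign. As a consistency check one can specialize to the sphere-product solutions of Theorem~\ref{prop:Sphere_product}, where $\nabla^{\perp}H=0$, $\chi\equiv0$, and the identity collapses to the elementary $\partial_{t}\!\big(m^{2}/a^{2}+l^{2}/b^{2}\big)=\tfrac{2ml}{ab}\big(m/a^{2}-l/b^{2}\big)$ coming from \eqref{eq:CliffordODE2}.
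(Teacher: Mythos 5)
Your route is essentially the paper's, reorganized: the paper obtains the same key identity
$\partial_t\rho=-2(\Delta^{\perp}(JH),H)-2g^{ik}g^{jl}(A_{ij},H)(A_{kl},JH)$ from its variation formula $\partial_t^{\perp}H=\Delta^{\perp}V+g^{im}g^{jl}(A_{ij},V)A_{ml}$ (Lemma \ref{lem:H_derivative}, proved by varying $A_{ij}=\nabla_i\nabla_jF$ directly), then converts the Laplacian term into a divergence via Lemma \ref{lem:divergence} and identifies $\chi$ via Proposition \ref{prop:chi}. Your only genuine deviation is deriving the normal variation of $H$ from $H=\Delta_gF$ together with the commutator $[\partial_t,\Delta_g]$ and the Gauss--Weingarten splitting of $\Delta_g\nu$ for a normal field $\nu$; that is correct and equivalent, and your observation that $\partial_t\sqrt{\det g}=0$ along the flow is a clean way to kill the extra commutator terms (it is the pointwise content of the paper's Lemma \ref{lem:vol_derivative} for $V=-JH$).

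The one step that does not follow from your own chain of identities is the final sign. You correctly establish $g^{ij}\left(\nabla^{\perp}_j(JH),H\right)e_i=-\tfrac12\rho\chi$ and the vanishing of $g^{ij}\left(\nabla_i^{\perp}(JH),\nabla_j^{\perp}H\right)$, whence $\left(\Delta^{\perp}(JH),H\right)=-\tfrac12\,{\rm div}(\rho\chi)$ and therefore $-2\left(\Delta^{\perp}(JH),H\right)=+{\rm div}(\rho\chi)$; substituted into your displayed formula for $\partial_t\rho$ this yields $\partial_t\rho-{\rm div}(\rho\chi)=-2g^{ik}g^{jl}\left(A_{ij},H\right)\left(A_{kl},JH\right)$, with a \emph{minus} sign in front of the divergence, not the plus sign you (and the theorem) state. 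The same tension is present in the paper itself: equation \eqref{eq:density} records $-2(\Delta^{\perp}H,JH)$ where Lemma \ref{lem:H_derivative} combined with Lemma \ref{lem:commute} and the skewness of $J$ gives $-2(\Delta^{\perp}(JH),H)=+2(\Delta^{\perp}H,JH)$. So either an orientation convention for $J$ (equivalently, for the torsion form $\tau$) is in force that neither you nor the paper makes explicit, or a sign typo propagates into the statement; you should resolve this rather than silently flip the sign in the last line. Note that your sphere-product sanity check cannot detect the issue, since there $\nabla^{\perp}H=0$ and $\chi\equiv0$, so it only tests the source term, whose sign you do get right.
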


Here and below one assumes the sum over repeated indices.
We give details of the proof and the full system of the equations in  Section \ref{subsec:Continuity equation}.

%\subsection{Special cases when the Willmore energy is conserved}\label{subsec:umibilical}

\medskip
%%%%%%%%%%%%%%%%%%%%%%%%%%%%%%%%%

\section{Torsion forms and torsion vector fields for membranes}\label{sec:torsion}

While torsion is a classical intrinsic notion in Riemannian geometry, for codimension 2 membranes one can
introduce a natural torsion of their embedding into the ambient Euclidean space, similar to that of curves in $\mathbb R^3$.
Namely, for such curves, according to the Frenet--Serret formulas,
the curvature vector $\kappa \,\bf n$ is described by its magnitude and the angle of rotation in the normal plane as a function of the curve parametrization.
Similarly, for codimension 2 membranes, one can define the mean curvature vector $H$ in the normal plane, while
its ``angle of rotation" leads to the following definition of the torsion connection form in the (normal) $S^1$-bundle over the membrane.

For an immersed submanifold $F:\Sigma^n\rightarrow\mathbb R^{n+2}$ consider the principal $S^1$-bundle $N$ 
of unit normal vectors over $\Sigma$: 
\begin{figure}[H]
\centerline{
\xymatrix{
S^1\ar[r] &N \ar[d]^{\pi}\\
&\Sigma^n
}}
\caption{The $S^1$-bundle  of unit normal vectors over $\Sigma$.} \label{fig:diag2}
\end{figure}
\noindent
Let $H$ be the field of mean curvature vectors over $\Sigma$, and we assume that $|H|\neq 0$ everywhere
(otherwise we pass to the open part $\Sigma ^*\subset \Sigma$ where $H$ is nonvanishing, as our consideration is local).
Then the normalized vectors $h:=\frac{H}{|H|}$ define a smooth section of the $S^1$-bundle  $N$.

\begin{definition}
The (normal) reference connection $A_0$ on $N$  is defined by setting the tangent space to 
$\Sigma$ to be its horizontal space.  Then any other connection $A$ in $N$ can be expressed 
as a connection (and hence a 1-form) on the base by comparing it to the connection $A_0$.
\end{definition}

Recall that  for any principal $G$-bundle $\pi: P\rightarrow B$, all its connections $A\in\Omega^1(P,\mathfrak g)$ form 
an affine space. Upon fixing a reference connection  $A_0\in\Omega^1(P,\mathfrak g)$,  any other connection 
$A\in\Omega^1(P,\mathfrak g)$ can be expressed via the difference $A-A_0\in\Omega^1(B,\mathfrak g)$, which is a $\mathfrak g$-valued 1-form on the base. In the case of $S^1$-bundle, this difference becomes a real-valued 1-form.

\begin{definition}\label{def:torsion}
The (mean) curvature connection $A$ on the principal $S^1$-bundle $N$ is defined by declaring 
the section   $h:=\frac{H}{|H|}:\Sigma\rightarrow \mathbb R^{n+2}$ be its horizontal section.
The generalized {\it torsion form} of the submanifold $F:\Sigma\rightarrow\mathbb R^{n+2}$ is the 1-form $\tau=A-A_0\in\Omega^1(\Sigma,\mathbb R)$, where $A_0\in\Omega^1(N,\mathbb R)$ is the normal connection  described above.

The {\it torsion vector field}  $\chi:=2\tau^\sharp$ is defined as metric dual to the  torsion form, i.e.
for any  vector  $v\in T_q\Sigma$  one sets  $(\chi,v)=2\tau(v)$ at  any point $q\in \Sigma$.
\end{definition}

\begin{proposition}\label{prop:dtau}
The 2-form $-d\tau$ is equal to the normal curvature of the submanifold $F:\Sigma\rightarrow\mathbb R^{n+2}$.
\end{proposition}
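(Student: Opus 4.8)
The plan is to identify the two connections $A_0$ and $A$ on the circle bundle $N$ with standard objects and then simply subtract them. Concretely: (i) the reference connection $A_0$ is, by its very definition, the connection induced on the unit-normal bundle $N$ by the normal connection $\nabla^\perp$ of the immersion $F$, so that its curvature $2$-form is (with the usual sign convention) the normal curvature $R^\perp$ of $F:\Sigma\to\mathbb R^{n+2}$; and (ii) the mean curvature connection $A$ is \emph{flat}, precisely because it was built so as to make the section $h=H/|H|$ horizontal. Granting (i) and (ii), the proposition drops out: the structure group $S^1$ is abelian, so the difference $\tau=A-A_0$ descends to a genuine $1$-form on $\Sigma$ (as already noted in Definition~\ref{def:torsion}) and its differential computes the difference of the two curvature $2$-forms, $d\tau=dA-dA_0=F_A-F_{A_0}=-R^\perp$, i.e. $-d\tau=R^\perp$.

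For step (ii) I would argue as follows. Since $h$ is $A$-horizontal, the pullback $h^{*}A$ of the connection form vanishes; hence $h^{*}(dA)=d(h^{*}A)=0$. For an abelian structure group the curvature $2$-form of any connection is basic (invariant and horizontal), so $h^{*}(dA)$ is exactly the curvature of $A$ viewed on the base, and we conclude $F_A=0$. Equivalently, the global section $h$ trivializes $N$ over the open set $\Sigma^{*}$ where $H\ne 0$ — which is where the whole construction lives — and in that trivialization $A$ is the product connection $d\vartheta$, manifestly flat.

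For step (i) I would unwind the definition of $A_0$: its horizontal subspace at a unit normal $\nu$ is declared to be $T\Sigma$, which means a path in $\Sigma$ is lifted by moving $\nu$ with the ambient flat connection of $\mathbb R^{n+2}$ and re-projecting onto the unit-normal sphere — and this is exactly $\nabla^\perp$-parallel transport. Thus $A_0$ is the connection associated to $\nabla^\perp$, and its curvature is the curvature of $\nabla^\perp$. For an oriented rank-$2$ normal bundle one identifies $\mathfrak{so}(2)\cong\mathbb R$ using the rotation $J$, turning the $\mathfrak{so}(2)$-valued curvature into the scalar $2$-form $R^\perp$ (the quantity appearing in the Ricci equation of submanifold geometry); since this curvature is basic, its pullback by $h$ is the same $R^\perp$ on $\Sigma$, and combining with step (ii) gives $d\tau = h^{*}(dA) - h^{*}(dA_0) = -R^\perp$.

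The only genuinely delicate point — the step I expect to be the main obstacle — is making the identification in step (i) airtight and checking it against the sign and orientation conventions actually in force (how $R^\perp$ is read off a positively oriented normal frame $(\xi_1,\xi_2)$, and how $J$ fixes the sign of the scalar $2$-form obtained by descent); this is precisely what forces the minus sign in the statement. Everything else — the flatness of $A$, the descent of $\tau$ and of both curvatures to the base, and the final subtraction — is formal and uses only that $S^1$ is abelian and that $h$ is a well-defined horizontal section on $\Sigma^{*}$.
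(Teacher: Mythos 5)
Your proposal is correct and follows essentially the same route as the paper's own proof: the curvature of $A$ vanishes because $h=H/|H|$ is a global horizontal section, the structure group $S^1$ being abelian lets one write $dA=d\tau+dA_0$ and subtract, and $dA_0$ is identified with the normal curvature since $A_0$ is induced by the normal connection. Your version merely spells out the flatness of $A$ and the identification of $F_{A_0}$ with $R^\perp$ in more detail than the paper does.
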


\begin{proof}
The exterior covariant derivative in $N$ is just the exterior derivative, since $S^1$ is abelian. Hence the curvature of the connection $A$ is $\Omega=dA= d\tau+dA_0$. Furthermore, $A$ is a flat connection, since $\frac{H}{|H|}$ is a global section.
We obtain that  $d\tau=-dA_0$, which means $-d\tau$ coincides with the  normal curvature, since $A_0$ is  induced by the normal connection.
\end{proof}

\begin{remark}
Proposition \ref{prop:dtau} emphasizes an important difference between the higher-dimensional and 1D cases: in higher dimensions the torsion form  $\tau$ is not exact in general, which partially explains the absence of the Hasimoto transform: one cannot introduce the ``phase" of the would-be wave function, i.e. the ``angle of rotation" of the mean curvature vector $H$, as it depends not only on a point $q\in \Sigma$, but also on a path along the membrane 
$\Sigma$ from a reference point $q_0$ to $q$. 
Furthermore, as we mentioned before, in higher dimensions the density $\rho:=|H|^2$ is not transported by the torsion 
vector field $\chi$ related to $\tau$, but satisfies the continuity equation with a source term (Theorem \ref{eq:source}).
\end{remark}

\medskip
%%%%%%%%%%%%%%%%%%%%%%%%%%%%%%%%%%%

\section{Appendix: Generalized Da Rios equations}\label{sec:DaRios}

The evolution of the codimension 2 membranes according to the binormal flow
satisfies a system of equations on its mean curvature vector $H$ and generalized torsion form $\tau$.
Here we derive those  {\it generalized Da Rios-type equations.} 
Due to their analogy with the compressible fluid equations, we will call the equation on 
the mean curvature  $H$  the continuity equation, while the evolution of the torsion form $\tau$
is the momentum equation. Some computations in this section can be found, e.g., in \cite{KuSc, SongSun}, and are included here to make the derivation of the Da Rios-type equations \eqref{eq:conti_eq}-\eqref{eq:moment_eq} self-contained.

\medskip
%%%%%%%%%%%%%%%%%%%%%%%%%%%%%%%%%

\subsection{Gradient of the Willmore energy}\label{subsec:willmore}

We start by deriving the gradient of the Willmore energy in any dimension, which could be of independent interest. 
For this we generalize the derivation of the Willmore gradient done in \cite{KuSc} 
for 2-dimensional, compact immersed surfaces in $\mathbb R^{d}$ to the case of   compact immersed submanifolds
of any dimension.

More specifically, consider an immersed submanifold $F:\Sigma^n\rightarrow\mathbb R^{n+k}$. Recall that the {\it Willmore energy} is defined as
\begin{equation}\label{willmore}
{\mathcal W}(F)=\int_{\Sigma}| H|^2\;d\text{vol}_{g},
\end{equation}
where $g=F^*g_e$ denotes the pull back metric of the Euclidean metric $g_e$ on $\mathbb R^{n+k}$ and $H$ is the corresponding mean curvature vector field.

In local coordinates $ (x_1,...,x_n)$ on the manifold $\Sigma$ the pull-back metric $g$ on $\Sigma$ is
$$
g_{ij}=\left( \partial_i F,\partial_j F \right),
$$
and the corresponding volume element is the $n$-form $d\text{vol}=\sqrt{\text{det}g_{ij}}\;dx_1\wedge\dots\wedge dx_n$.

We have the following splitting of the pull-back bundle $F^*T\mathbb R^{n+k}=\bigcup_{q\in \Sigma}T_{F(q)}\mathbb R^{n+k}$:
$$
T_{F(q)}\mathbb R^{n+k}=DF|_{q}(T_q\Sigma)\oplus N_p\Sigma,
$$
where $DF$ is the tangent map of $F$.
The second fundamental form $A_{ij}=(\partial_i \partial_j F)^{\perp}$ is the projection of the second derivatives of $F$ to the normal bundle $ N_p\Sigma$. Then the mean curvature vector at any point is $ H=g^{ij}A_{ij}$, where $(g^{ij})$ is the inverse matrix of the induced metric $(g_{ij})$.

Now we give the formula of the normal gradient of the Willmore energy.

\begin{theorem}\label{prop:gradient}
The normal part of the gradient of the Willmore energy is 
\begin{equation}\label{gradient}
\frac 12\nabla^{\perp} {\mathcal W}=\Delta^{\perp} H+g^{ik}g^{jl}\left( A_{ij},H\right) A_{kl}-\frac 12 |H|^2H,
\end{equation}
where $\Delta^{\perp}=g^{ij}\nabla^{\perp}_i\nabla^{\perp}_j$ denotes the Laplacian in the normal bundle, and $\nabla^{\perp}_i=\nabla^{\perp}_{\frac{\partial}{\partial x_i}}$ is the normal connection.
\end{theorem}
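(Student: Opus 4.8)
The plan is to compute the first variation of $\mathcal{W}(F) = \int_\Sigma |H|^2\,d\mathrm{vol}_g$ under a compactly supported normal variation $F_t = F + t\,\phi$, where $\phi$ is a section of the normal bundle $N\Sigma$, and then read off the gradient as the normal vector field $\nabla^\perp \mathcal{W}$ such that $\frac{d}{dt}\big|_{t=0}\mathcal{W}(F_t) = \int_\Sigma (\nabla^\perp\mathcal{W},\phi)\,d\mathrm{vol}_g$. There are two pieces to differentiate: the integrand $|H|^2 = (g^{ij}A_{ij}, g^{kl}A_{kl})$ and the volume element $d\mathrm{vol}_g = \sqrt{\det g_{ij}}\,dx$. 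I would organize the computation around the standard first-variation formulas: $\partial_t g_{ij} = -2(A_{ij},\phi)$ for a normal variation, hence $\partial_t g^{ij} = 2g^{ik}g^{jl}(A_{kl},\phi)$ and $\partial_t \sqrt{\det g} = -(H,\phi)\sqrt{\det g}$; and the variation of the second fundamental form, $\partial_t A_{ij} = (\nabla^\perp_i\nabla^\perp_j \phi)^\perp + (\text{terms involving } A \text{ and } \phi)$, which after contraction gives $\partial_t H$ in terms of $\Delta^\perp\phi$ plus curvature corrections. The scheme follows Kuwert--Schätzle \cite{KuSc} verbatim except that one works with the normal bundle of an $n$-dimensional $\Sigma$ rather than a surface, so the algebra of index contractions is identical in shape.

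First I would differentiate $|H|^2$: writing $H = g^{ij}A_{ij}$, we get $\partial_t|H|^2 = 2(\partial_t H, H) = 2\big((\partial_t g^{ij})A_{ij}, H\big) + 2\big(g^{ij}\partial_t A_{ij}, H\big)$. The first term contributes $4 g^{ik}g^{jl}(A_{kl},\phi)(A_{ij},H)$, which after integration yields the $g^{ik}g^{jl}(A_{ij},H)A_{kl}$ term in the gradient (times $2$ on the left side). For the second term, substituting the variation formula for $A_{ij}$ and integrating by parts twice in the normal connection produces $\int_\Sigma (\Delta^\perp H, \phi)\,d\mathrm{vol}$ — this is where $\Delta^\perp H$ enters — together with lower-order algebraic terms coming from the Gauss/Codazzi corrections in $\partial_t A_{ij}$; these need to be collected carefully. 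Finally, differentiating the volume form gives $\int_\Sigma |H|^2 \partial_t(\sqrt{\det g})\,dx = -\int_\Sigma |H|^2 (H,\phi)\,d\mathrm{vol}$, which supplies the $-\tfrac12|H|^2 H$ term (again up to the overall factor $\tfrac12$). Assembling the three contributions and using that $\phi$ is arbitrary yields \eqref{gradient}.

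The main obstacle — really the only nonroutine point — is getting the variation formula for $A_{ij}$ exactly right and then handling the integration by parts for the $(g^{ij}\partial_t A_{ij}, H)$ term so that the extraneous terms cancel against the contributions from $\partial_t g^{ij}$ and from commuting covariant derivatives (the normal-bundle curvature and the Simons-type identities). In the surface case \cite{KuSc} this bookkeeping is delicate; in arbitrary dimension one must check that no new terms appear that are not already present in the $n=2$ formula, i.e. that all the dimension-dependence is absorbed into the traces defining $H$ and $\Delta^\perp$. I expect — and would verify by a direct index computation — that the structure of \eqref{gradient} is dimension-independent precisely because every term is a full contraction of $A$, $H$, and $\nabla^\perp$ with the metric, so the only role of $n$ is the range of summation. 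One should also note for completeness that the tangential part of the gradient vanishes (or rather, only the normal part is meaningful) because a tangential variation is generated by a diffeomorphism of $\Sigma$ and $\mathcal{W}$ is reparametrization-invariant; this justifies computing only with normal $\phi$.
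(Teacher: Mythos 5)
Your proposal is correct and follows essentially the same route as the paper: the paper's proof is exactly this first-variation computation, packaged into two lemmas giving $\partial_t\, d\mathrm{vol}_g=-\left(H,V\right) d\mathrm{vol}_g$ and $\partial^{\perp}_t H=\Delta^{\perp}V+g^{im}g^{jl}\left(A_{ij},V\right)A_{ml}$, followed by integration by parts on the $\Delta^{\perp}$ term. One small bookkeeping point: the middle term of the gradient does not come from $\partial_t g^{ij}$ alone (which contributes a coefficient $4$ in the integrand) but from its combination with the $-\left(A_{jm},V\right)g^{ml}A_{il}$ correction in $\partial_t A_{ij}$ (contributing $-2$), which is exactly the cancellation you flag as needing care.
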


To prove this theorem, we need the following two lemmas, which we include for a self-contained proof.

\begin{lemma}[(cf. \cite{KuSc, SongSun})]\label{lem:vol_derivative}
For a smooth family of immersions $F_t:\Sigma^n\rightarrow\mathbb R^{n+k}$ with  a normal variation $\partial_t F_t|_{t=0}=V$ 
along $F_t$, the time derivative of the volume element is
\begin{equation}\label{eq:vol_derivative} 
\partial_t d{\rm vol}_g=-\left( H,\,V\right)\, d{\rm vol}_g.
\end{equation}
\end{lemma}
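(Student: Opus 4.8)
The plan is to compute $\partial_t d\mathrm{vol}_g$ by differentiating the local formula $d\mathrm{vol}_g = \sqrt{\det(g_{ij})}\,dx_1\wedge\cdots\wedge dx_n$, which reduces the problem to computing $\partial_t \det(g_{ij})$, and hence to computing $\partial_t g_{ij}$. First I would write $g_{ij}(t) = (\partial_i F_t, \partial_j F_t)$ and differentiate in $t$ at $t=0$, using that $\partial_t F_t|_{t=0} = V$ and that partial derivatives in $x$ and $t$ commute, to obtain $\partial_t g_{ij} = (\partial_i V, \partial_j F) + (\partial_i F, \partial_j V)$.

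The key step is to rewrite each term $(\partial_i V, \partial_j F)$ so that the normality of $V$ enters. Since $V$ is a normal vector field along $F$, we have $(V, \partial_j F) = 0$ identically on $\Sigma$; differentiating this with respect to $x_i$ gives $(\partial_i V, \partial_j F) = -(V, \partial_i\partial_j F) = -(V, A_{ij})$, where the last equality uses that $V$ is normal, so only the normal part $A_{ij} = (\partial_i\partial_j F)^\perp$ of the second derivative contributes to the inner product with $V$. Symmetrizing, this yields $\partial_t g_{ij} = -2(V, A_{ij})$.

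Then I would use the standard Jacobi-type formula $\partial_t \log\det(g_{ij}) = g^{ij}\partial_t g_{ij}$ (equivalently $\partial_t\sqrt{\det g} = \tfrac12\sqrt{\det g}\, g^{ij}\partial_t g_{ij}$), so that
\[
\partial_t d\mathrm{vol}_g = \tfrac12 g^{ij}(\partial_t g_{ij})\, d\mathrm{vol}_g = -g^{ij}(V, A_{ij})\, d\mathrm{vol}_g = -(V, g^{ij}A_{ij})\, d\mathrm{vol}_g = -(V, H)\, d\mathrm{vol}_g,
\]
using the identity $H = g^{ij}A_{ij}$ recalled just before the lemma. This is exactly \eqref{eq:vol_derivative}.

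There is no serious obstacle here; this is a routine first-variation-of-area computation adapted to arbitrary codimension, and the only points requiring a little care are the commutation of $\partial_t$ with $\partial_{x_i}$ (valid for a smooth family), the use of normality of $V$ to pass from $\partial_i\partial_j F$ to its normal projection $A_{ij}$, and keeping track of the factor $\tfrac12$ from differentiating the square root of the determinant. The restriction to normal variations $V$ is what makes the tangential terms drop out cleanly; for a general variation one would pick up an extra divergence term, but that is not needed for the statement as given.
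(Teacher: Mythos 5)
Your proposal is correct and follows essentially the same route as the paper's proof: differentiate $g_{ij}=(\partial_i F,\partial_j F)$, use normality of $V$ to convert $(\partial_i V,\partial_j F)$ into $-(V,A_{ij})$, and apply the Jacobi formula for $\partial_t\det(g_{ij})$ to get $-(H,V)\,d\mathrm{vol}_g$. The only cosmetic difference is that you make explicit the differentiation of the identity $(V,\partial_j F)=0$, which the paper carries out implicitly in the same chain of equalities.
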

\begin{proof}
One has $\partial_t \text{det}(g_{ml}) = (g^{ij}\partial_t g_{ij})\text{det}(g_{ml}),$ and
$$
\partial_t g_{ij} = \left( \partial_t \partial_i F,\partial_j F \right)+\left( \partial_i F, \partial_t\partial_j F \right)
=-\left( \partial_t F, \partial_i \partial_j F \right)-\left( \partial_i \partial_j F, \partial_t F \right)
=-2\left( A_{ij},\,V\right).
$$
From this we obtain
$$
\partial_t \text{det}(g_{ml})=-2g^{ij}\left( A_{ij},\,V\right)\,\text{det}(g_{ml})=-2\left( H,\,V\right)\,\text{det}(g_{ml}).
$$
Therefore,
$$
\partial_t\sqrt{\text{det}(g_{ml})}=\frac {1}{2\sqrt{\text{det}(g_{ml})}}\partial_t \text{det}(g_{ml})=-\left( H,\,V\right)\,\sqrt{\text{det}(g_{ml})},
$$
i.e., $\partial_t d\text{vol}_g=-\left( H,\,V\right)\, d\text{vol}_g.$
\end{proof}

Define the normal  derivative $\partial^{\perp}_t H$ of the mean curvature vector $H$ 
as the projection of the time derivative $\partial_t H$  to the normal bundle to $\Sigma$.

\begin{lemma}[(cf. \cite{KuSc})]\label{lem:H_derivative}
For a smooth family of immersions $F_t:\Sigma^n\rightarrow\mathbb R^{n+k}$ with a normal field $\partial_t F_t|_{t=0}=V$  along $F_t$, the normal time derivative $\partial^{\perp}_t H$ 
of  $H$ is
$$
\partial^{\perp}_t H=\Delta^{\perp}V+g^{im}g^{jl}\left( A_{ij},V\right) A_{ml}.
$$
\end{lemma}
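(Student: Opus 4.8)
The plan is to compute the normal time derivative $\partial_t^\perp H$ directly from the definition $H = g^{ij}A_{ij}$, differentiating both the inverse metric $g^{ij}$ and the second fundamental form $A_{ij}$ in $t$, and then identifying the result with $\Delta^\perp V + g^{im}g^{jl}(A_{ij},V)A_{ml}$. Since we only want the \emph{normal} projection, we are free throughout to discard tangential contributions, which simplifies the bookkeeping considerably.

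First I would differentiate $g^{ij}$: from $\partial_t g_{ij} = -2(A_{ij},V)$ (established inside the proof of Lemma \ref{lem:vol_derivative}) one gets $\partial_t g^{ij} = -g^{ik}g^{jl}\partial_t g_{kl} = 2g^{ik}g^{jl}(A_{kl},V)$. Contracting with $A_{ij}$ produces exactly the algebraic term $2g^{ik}g^{jl}(A_{kl},V)A_{ij}$ — up to reindexing this matches $2g^{im}g^{jl}(A_{ij},V)A_{ml}$ in the statement; the factor discrepancy with the claimed single copy will be absorbed once the $\partial_t A_{ij}$ term also contributes a copy with opposite-ish structure, so I would keep careful track of coefficients. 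Next, the harder computation: $A_{ij} = (\partial_i\partial_j F)^\perp$, so $\partial_t A_{ij}$ requires commuting $\partial_t$ past $\partial_i\partial_j$ (they commute on $F$ since coordinates are $t$-independent), giving $\partial_i\partial_j V$, and then carefully handling how the normal projection $(\cdot)^\perp$ varies in $t$. Writing $\partial_i\partial_j V = (\partial_i\partial_j V)^\perp + (\partial_i\partial_j V)^\top$ and rewriting the normal part via the normal connection, $(\partial_i\partial_j V)^\perp = \nabla_i^\perp\nabla_j^\perp V + (\text{terms involving } A \text{ acting on the tangential part of } \nabla_j V)$, I would use the Gauss and Weingarten relations to convert $\partial_i(\text{tangential vector})$ back into normal contributions of the form $g^{kl}(A_{ik},\cdot)A_{jl}$-type expressions. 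Summing $g^{ij}\partial_t A_{ij}$ then yields $g^{ij}\nabla_i^\perp\nabla_j^\perp V = \Delta^\perp V$ plus curvature-of-$A$ terms.

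The main obstacle, as usual in submanifold geometry, is correctly accounting for the Weingarten/Gauss terms when projecting $\partial_i\partial_j V$ onto the normal bundle and when $\partial_i$ hits already-projected quantities: one has to be disciplined about which terms are tangential (hence discarded, since we take $\partial_t^\perp$), and to recognize that several superficially different contractions of $A$ with $V$ and $A$ collapse — after using symmetry of $A_{ij}$ and relabelling dummy indices — into the single term $g^{im}g^{jl}(A_{ij},V)A_{ml}$. I would organize the computation so that all tangential debris is dropped at the earliest opportunity, then collect the surviving normal terms, verify the Laplacian term emerges with coefficient $1$ and the quadratic term with coefficient $1$, matching the statement. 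A useful consistency check is to feed this into $\partial_t \mathcal W = \int_\Sigma \partial_t(|H|^2)\,d\mathrm{vol}_g + \int_\Sigma |H|^2\,\partial_t d\mathrm{vol}_g$, using Lemma \ref{lem:vol_derivative} and $\partial_t|H|^2 = 2(\partial_t^\perp H, H)$, integrating by parts, and confirming one recovers the Willmore gradient formula \eqref{gradient} of Theorem \ref{prop:gradient}; this both validates the coefficients and is presumably how Theorem \ref{prop:gradient} is proved from this lemma.
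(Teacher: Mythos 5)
Your plan is correct and follows essentially the same route as the paper's proof: differentiate $H=g^{ij}A_{ij}$ in $t$, use $\partial_t g^{ij}=2g^{ik}g^{jl}(A_{kl},V)$ to get the coefficient-$2$ algebraic term, and compute $\partial_t^\perp A_{ij}=(\nabla_i\nabla_j V)^\perp=\nabla_i^\perp\nabla_j^\perp V-(A_{jm},V)g^{ml}A_{il}$ by splitting off the tangential part of $\nabla_j V$ via the Weingarten relation, so that the two quadratic contributions combine as $2-1=1$ exactly as you anticipate. The consistency check against the Willmore gradient is also how the paper uses this lemma, so no gaps to report.
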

\begin{proof}
Since $A_{ij}=(\partial_i\partial_j F)^{\perp}=\partial_{i}\partial_jF-\Gamma^k_{ij}\partial_k F=\nabla_i\nabla_jF$, one has
$$
\begin{array}{rcl}
\partial^{\perp}_tA_{ij}&=&(\partial_{i}\partial_jV-\Gamma^k_{ij}\partial_k V)^{\perp}=(\nabla_i\nabla_jV)^{\perp}=\nabla^{\perp}_i\nabla^{\perp}_j V+\nabla^{\perp}_i(\left(\partial_jV,\,\partial_mF\right) g^{ml}\partial_lF)\\
                                       &=&\nabla^{\perp}_i\nabla^{\perp}_j V-\left( A_{jm},\,V\right) g^{ml}\nabla^{\perp}_i\partial_lF=\nabla^{\perp}_i\nabla^{\perp}_j V-\left( A_{jm},\,V\right) g^{ml}A_{il}.
\end{array}
$$
For $H=g^{ij}A_{ij}$ we obtain
$$
\begin{array}{rcl}
\partial^{\perp}_tH&=&g^{ij}(\partial^{\perp}_tA_{ij})+(\partial^{\perp}_tg^{ij})A_{ij}\\
                                &=&g^{ij}(\nabla^{\perp}_i\nabla^{\perp}_j V-\left( A_{jm},\,V\right) g^{ml}A_{il})+2g^{im}g^{jl}\left( A_{ml},\,V\right) A_{ij}\\
                                       &=&\Delta^{\perp}V+g^{im}g^{jl}\left( A_{ij},V\right) A_{ml}.
\end{array}
$$
\end{proof}

Now we can complete the proof of the theorem.

\begin{proof}[Proof of Theorem \ref{prop:gradient}]
Consider a smooth family of immersions $F_t:\Sigma^n\rightarrow\mathbb R^{n+k}$ with the field
$\partial_t F_t|_{t=0}=V$ normal along $F_t$. Then the time derivative of the Willmore energy is
$$
\begin{array}{rcl}
\partial_t {\mathcal W}(F_t)&=&\int_{\Sigma}\partial_t(|H|^2\,d\text{vol}_g)= 2\int_{\Sigma}\left( \partial^{\perp}_tH,\,H\right)\,d\text{vol}_g+\int_{\Sigma}|H|^2\partial_t d\text{vol}_g\\
&=&2\int_{\Sigma}\left( \Delta^{\perp}V,\,H\right)+g^{im}g^{jl}\left( A_{ij},V\right) \left( A_{ml},\,H\right)-\frac 12 |H|^2\left( H,\,V\right)\,d\text{vol}_g\\
&=& 2\int_{\Sigma}\left( \Delta^{\perp} H+g^{im}g^{jl}\left( A_{ij},H\right) A_{ml}-\frac 12 |H|^2H,\,V\right) \,d\text{vol}_g.
\end{array}
$$
Therefore, the normal gradient of the Willmore energy is 
$$\frac 12\nabla^{\perp} {\mathcal W}=\Delta^{\perp} H+g^{im}g^{jl}\left( A_{ij},H\right) A_{ml}-\frac 12 |H|^2H.$$
\end{proof}

%%%%%%%%%%%%%%%%%%%%%%%%%%%%%%%%%

\subsection{Evolution of the Willmore energy in the skew-mean-curvature flows}\label{subsec:time_deri}

Consider now a smooth family of immersions $F_t:\Sigma^n\rightarrow\mathbb R^{n+2}$  evolved by the skew-mean-curvature flow: $\partial_t F_t|_{t=0}=-JH$, where $J$ is the operator of rotation by $\pi/2$ in the positive direction in every normal space to $\Sigma$.

\begin{proposition}
The Willmore energy of $\Sigma$ changes in time in the skew-mean-curvature  flow as follows:
$$
\partial_t {\mathcal W}(F_t)=-2\int_{\Sigma}\left( A^l_{i},H\right) \left( A^i_{l},JH\right) \,d{\rm vol}_g.
$$
\end{proposition}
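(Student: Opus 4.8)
The plan is to combine the two preparatory lemmas with the specialization to the skew-mean-curvature flow, exactly as in the proof of Theorem \ref{prop:gradient}, but now keeping track of the fact that the variation field is \emph{normal} and equal to $V=-JH$. First I would note that $-JH$ is indeed a normal field along $\Sigma$, so Lemmas \ref{lem:vol_derivative} and \ref{lem:H_derivative} apply verbatim with $V=-JH$. Differentiating $\mathcal W(F_t)=\int_\Sigma|H|^2\,d{\rm vol}_g$ and splitting the $t$-derivative between $|H|^2$ and $d{\rm vol}_g$ gives, just as before,
\[
\partial_t\mathcal W(F_t)=2\int_\Sigma\bigl(\partial_t^\perp H,\,H\bigr)\,d{\rm vol}_g+\int_\Sigma|H|^2\,\partial_t d{\rm vol}_g .
\]

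Next I would substitute the two lemmas. By Lemma \ref{lem:H_derivative} with $V=-JH$ one has $\partial_t^\perp H=\Delta^\perp(-JH)+g^{im}g^{jl}\bigl(A_{ij},-JH\bigr)A_{ml}$, and by Lemma \ref{lem:vol_derivative} one has $\partial_t d{\rm vol}_g=-\bigl(H,-JH\bigr)\,d{\rm vol}_g$. The crucial simplifications are the two pointwise identities coming from $J$ being a $\pi/2$-rotation: $\bigl(H,JH\bigr)=0$, which kills the volume term entirely, and $\bigl(\Delta^\perp(JH),H\bigr)$ should be shown to integrate to zero. For the latter I would integrate by parts twice in the normal bundle, using that $\nabla^\perp$ is compatible with the metric and that $J$ is parallel with respect to the normal connection (it is the pointwise $\pi/2$-rotation in the $2$-dimensional normal plane, hence $\nabla^\perp$-covariantly constant), so $\int_\Sigma(\Delta^\perp(JH),H)=\int_\Sigma(JH,\Delta^\perp H)=-\int_\Sigma(J\Delta^\perp H,H)=-\int_\Sigma(\Delta^\perp(JH),H)$, forcing the integral to vanish.

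What then survives is $\partial_t\mathcal W(F_t)=2\int_\Sigma g^{im}g^{jl}\bigl(A_{ij},-JH\bigr)\bigl(A_{ml},H\bigr)\,d{\rm vol}_g$, which after rewriting the index contractions in the mixed form $A^l_i=g^{lm}A_{im}$ and relabeling becomes the asserted
\[
\partial_t\mathcal W(F_t)=-2\int_\Sigma\bigl(A^l_i,H\bigr)\bigl(A^i_l,JH\bigr)\,d{\rm vol}_g .
\]
The main obstacle I anticipate is the careful bookkeeping in the integration-by-parts argument for the $\Delta^\perp$ term: one must be sure that $J$ commutes with $\Delta^\perp$ (equivalently, that $J$ is parallel for the normal connection on a codimension-$2$ submanifold) and that no boundary terms appear, which is fine since $\Sigma$ is closed. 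Everything else is a direct substitution into the computation already performed for Theorem \ref{prop:gradient}, together with the two antisymmetry/orthogonality properties of the rotation operator $J$.
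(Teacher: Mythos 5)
Your proposal is correct and follows essentially the same route as the paper: the paper pairs the Willmore gradient formula of Theorem \ref{prop:gradient} (itself derived from Lemmas \ref{lem:vol_derivative} and \ref{lem:H_derivative}) with $V=-JH$, kills the cubic term by $H\perp JH$, and kills the Laplacian term via one integration by parts together with Lemma \ref{lem:commute} ($\nabla^{\perp}J=J\nabla^{\perp}$), which makes $\left(\nabla^{\perp}H, J\nabla^{\perp}H\right)$ vanish pointwise. Your only deviation is cosmetic --- you dispose of $\int_{\Sigma}\left(\Delta^{\perp}(JH),H\right)d{\rm vol}_g$ by a self-adjointness-plus-skewness symmetry argument rather than the pointwise identity --- and both versions rest on exactly the fact you flag as the main obstacle, namely that $J$ is parallel for the normal connection, which the paper proves as Lemma \ref{lem:commute}.
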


\begin{proof}
Employing the gradient formula for the Willmore energy 
we obtain 
$$
\begin{array}{rcl}
\partial_t {\mathcal W}(F_t)&=& -2\int_{\Sigma}\left(\nabla {\mathcal W},\,JH\right)\,d\text{vol}_g
= -2\int_{\Sigma}\left(\nabla^{\perp} {\mathcal W},\,JH\right)\,d\text{vol}_g\\
&=&-2\int_{\Sigma}\left( \Delta^{\perp} H+g^{im}g^{jl}\left( A_{ij},H\right) A_{ml}-\frac 12 |H|^2H,\,JH\right) \,d\text{vol}_g.
\end{array}
$$
The third term $\left( \frac 12 |H|^2H,\,JH\right)$ is pointwise zero on $\Sigma$, since $H\perp JH$. 

For the first term we need the following lemma.
\begin{lemma}[(see \cite{SongSun})]\label{lem:commute}
$\nabla^{\perp}J=J\nabla^{\perp}.$
\end{lemma}
\begin{proof}
Let us prove that $\nabla^{\perp}JV=J\nabla^{\perp}V$ for an arbitrary unit normal vector field $V$. Note that
$\{V,U=JV\}$ form a local orthonormal frame. Hence for any tangent vector field $X$, we have
$$
J\nabla^{\perp}_X V=J(\partial_XV)^{\perp}=J(\left(\partial_X V, V\right) V +\left(\partial_X V,U\right) U)=J\left(\partial_X V,U\right) U,
$$
and
$$
\nabla^{\perp}_X(JV) =\nabla^{\perp}_X(U) = (\partial_XU)^{\perp}=\left(\partial_XU, V\right) V +\left(\partial_XU,U\right) U=\left(\partial_XU, V\right) V=J\left(\partial_X V,U\right) U.
$$
Therefore $J\nabla^{\perp}_X V=\nabla^{\perp}_X(JV)$.
\end{proof}

To complete the proof of the proposition, we integrate by parts 
the first term of $\partial_t {\mathcal W}(F_t)$:
$$
-\int_{\Sigma}\left( \Delta^{\perp} H,\,JH\right) \,d\text{vol}_g=\int_{\Sigma}\left( \nabla^{\perp} H,\,\nabla^{\perp} JH\right) \,d\text{vol}_g,
$$
and by Lemma \ref{lem:commute}, $\left( \nabla^{\perp} H,\,\nabla^{\perp} JH\right)=\left( \nabla^{\perp} H,\,J\nabla^{\perp} H\right)=0$  pointwise on $\Sigma$, i.e., $-\int_{\Sigma}\left( \Delta^{\perp} H,\,JH\right) \,d\text{vol}_g$ vanishes on $\Sigma$.
Thus we conclude that 
$$
\partial_t {\mathcal W}(F_t)=-2\int_{\Sigma}g^{im}g^{jl}\left( A_{ij},H\right) \left( A_{ml},JH\right) \,d\text{vol}_g=-2\int_{\Sigma}\left( A^l_{i},H\right) \left( A^i_{l},JH\right) \,d\text{vol}_g.
$$
\end{proof}

\begin{corollary}
The Willmore energy of a closed submanifold $\Sigma$
is invariant under the skew-mean-curvature flow, if and only if 
$$
\int_{\Sigma}\left( A^l_{i},H\right) \left( A^i_{l},JH\right) \,d{\rm vol}_g=0
$$
for all times $t$.
\end{corollary}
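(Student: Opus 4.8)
The plan is to obtain the Corollary directly from the preceding Proposition, which already supplies the exact time derivative
$$
\partial_t {\mathcal W}(F_t)=-2\int_{\Sigma}\left( A^l_{i},H\right) \left( A^i_{l},JH\right) \,d{\rm vol}_g.
$$
First I would note that since $\Sigma$ is closed, ${\mathcal W}(F_t)=\int_\Sigma |H|^2\,d{\rm vol}_g$ is a finite quantity for each $t$, and since the family $F_t$ (hence the induced metric, the second fundamental form, and $H$) depends smoothly on $t$ on the interval of existence, the integrand $|H|^2\,d{\rm vol}_g$ together with all its $t$-derivatives is continuous in $t$ over a compact manifold. Differentiation under the integral sign is therefore legitimate, and $t\mapsto {\mathcal W}(F_t)$ is a smooth function of one real variable whose derivative is the formula above (this is precisely what the Proposition established, using Lemma \ref{lem:commute} and the pointwise orthogonality $H\perp JH$).

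The remaining step is the elementary observation that a differentiable function on an interval is constant if and only if its derivative vanishes identically on that interval. Applying this to $t\mapsto {\mathcal W}(F_t)$ and substituting the derivative formula, invariance of the Willmore energy along the skew-mean-curvature flow is equivalent to
$$
\int_{\Sigma}\left( A^l_{i},H\right) \left( A^i_{l},JH\right) \,d{\rm vol}_g=0
$$
holding for every $t$ in the interval of existence. The "if'' direction is immediate from the derivative formula, while the "only if'' direction uses nothing beyond the smoothness of ${\mathcal W}(F_t)$ recorded above; the phrasing "for all times $t$'' is needed because the vanishing of the integrand must hold along the entire trajectory, not merely at the initial instant.

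I do not expect a genuine obstacle at this stage: the Corollary is a formal consequence of the Proposition, and the substantive content — the integration-by-parts computation that kills both the $\Delta^\perp H$ and the $\tfrac12|H|^2 H$ contributions and leaves only the curvature-quadratic source term — has already been carried out. If one wanted to be scrupulous, the only point worth a line is justifying differentiation under the integral, which is routine here by compactness of $\Sigma$ and smoothness of the flow.
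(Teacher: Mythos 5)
Your argument is correct and is exactly how the paper treats this statement: the corollary is an immediate consequence of the preceding proposition's formula for $\partial_t \mathcal{W}(F_t)$, combined with the elementary fact that a smooth function of $t$ is constant if and only if its derivative vanishes identically. The paper offers no separate proof, and your added remark on differentiating under the integral sign is a harmless (and correct) extra justification.
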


\begin{remark}
For a 1-dimensional $\Sigma$, i.e for a vortex filament $\gamma$, the second fundamental form reduces to the mean curvature of the curve:  $A=H=\kappa$. Hence $\left( A,H\right) \left( A,JH\right)=0$ at every point on $\Sigma$. However, in higher dimensions the pointwise identity $\left( A^l_{i},H\right) \left( A^i_{l},JH\right)=0$  does not necessarily hold on the membrane $\Sigma$, and hence the Willmore energy might not conserve.
\end{remark}

\begin{remark}
For a Clifford torus the computation of  $\int_{\mathbb T^2}\left( A_{ij},H\right) \left( A_{ij},JH\right) \,d\text{vol}_g$ 
is straightforward: since
$$
\left( A_{ij},H\right)\left( A_{ij},JH\right)=-\frac {1}{a^2}\frac{1}{ab}+\frac {1}{b^2}\frac{1}{ab}=-\frac{1}{a^3b}+\frac{1}{ab^3},
$$
we obtain
$$
\int_{\mathbb T^2} \left( A_{ij},H\right)\left( A_{ij},JH\right) \,d{\rm vol}_g=\int_0^{2\pi}\int_0^{2\pi}\left(-\frac{1}{a^3b}+\frac{1}{ab^3}\right)ab\,d\theta d\phi=4\pi^2\left(\frac {1}{b^2}-\frac{1}{a^2}\right).
$$

So if $a\neq b$ at time $t$, we have $\int_\Sigma \left( A_{ij},H\right)\left( A_{ij},JH\right) \,d{\rm vol}_g\neq 0$. Furthermore,
the tori with equal radiuses $a=b$ do not form an invariant set, since under the skew-mean-curvature flow one radius of  
the Clifford torus is increasing, while the other is decreasing. 
Thus the Willmore energy is not invariant under the skew-mean-curvature flow for any initial values $a$ and $b$. 
\end{remark}
\medskip
%%%%%%%%%%%%%%%%%%%%%%%%%%%%%%%%%

\subsection{The continuity equation and generalized Da Rios equations}\label{subsec:Continuity equation}

Let $F_t:\Sigma^n\rightarrow\mathbb R^{n+2}$ be  a codimension 2 vortex membrane moving by the skew-mean-curvature flow. Let $(x_1,x_2,\cdots,x_n)$ be local coordinates on $\Sigma^n$, and $\{e_1,e_2,\cdots,e_n\}$  be the corresponding local frame in the tangent space.

According to Lemma \ref{lem:H_derivative},  the time derivative of the square of the mean curvature is 
\begin{equation}\label{eq:density}
\partial_t|H|^2=-2\left( \Delta^{\perp} H+g^{ik}g^{jl}\left( A_{ij},H\right) A_{kl},\,JH\right).
\end{equation}
It turns out that the first term in the right-hand side can be expressed as the divergence of a certain vector field.

\begin{lemma}\label{lem:divergence}
For a  vector field
$\sigma:=\left( g^{ik}\nabla^{\perp}_kH,\,JH\right)e_i$ on the submanifold $\Sigma$, its divergence is
$$
{\rm div}_{\Sigma}\,\sigma=\left( \Delta^{\perp}H,JH\right).
$$
\end{lemma}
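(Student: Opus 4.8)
The plan is to recognize $\sigma$ as the metric dual $\omega^{\sharp}$ of the $1$-form $\omega$ on $\Sigma$ with components $\omega_k=\left( \nabla^{\perp}_{e_k}H,\,JH\right)$, so that ${\rm div}_{\Sigma}\sigma=g^{ik}\nabla_i\omega_k$, where $\nabla$ is the Levi-Civita connection of the induced metric $g$. To keep the bookkeeping light I would fix an arbitrary point $q\in\Sigma$ and evaluate at $q$ in geodesic normal coordinates centered at $q$, so that at $q$ one has $\Gamma^m_{ik}=0$ and $\partial_i g_{jk}=0$; hence ${\rm div}_{\Sigma}\sigma=\partial_i\sigma^i=g^{ik}\partial_i\omega_k$ at $q$.

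First I would differentiate $\omega_k$, using that the normal connection $\nabla^{\perp}$ is compatible with the fibre metric that the Euclidean metric of $\mathbb R^{n+2}$ induces on the normal bundle:
$$
\partial_i\omega_k=\left( \nabla^{\perp}_{e_i}\nabla^{\perp}_{e_k}H,\,JH\right)+\left( \nabla^{\perp}_{e_k}H,\,\nabla^{\perp}_{e_i}(JH)\right)\qquad\text{at }q,
$$
and then invoke Lemma~\ref{lem:commute} to replace $\nabla^{\perp}_{e_i}(JH)$ by $J\nabla^{\perp}_{e_i}H$.

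Contracting with $g^{ik}$ then breaks ${\rm div}_{\Sigma}\sigma$ into two pieces. The first, $g^{ik}\left( \nabla^{\perp}_{e_i}\nabla^{\perp}_{e_k}H,\,JH\right)=\left( \Delta^{\perp}H,\,JH\right)$, is the desired term. The second, $g^{ik}\left( \nabla^{\perp}_{e_k}H,\,J\nabla^{\perp}_{e_i}H\right)$, vanishes: since $J$ is the orthogonal $\pi/2$-rotation in each normal plane it is skew-adjoint, $\left( u,\,Jv\right)=-\left( v,\,Ju\right)$, so the summand is antisymmetric under the interchange $i\leftrightarrow k$ whereas $g^{ik}$ is symmetric, which forces the contraction to be zero. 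Thus ${\rm div}_{\Sigma}\sigma=\left( \Delta^{\perp}H,\,JH\right)$ at $q$, and since $q$ was arbitrary the identity holds on $\Sigma$.

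The argument presents no real difficulty: its only inputs are the metric-compatibility of $\nabla^{\perp}$ and Lemma~\ref{lem:commute}, and the single point deserving attention is the role of the Christoffel symbols in identifying the contraction $g^{ik}\nabla^{\perp}_i\nabla^{\perp}_k$ with the normal Laplacian $\Delta^{\perp}$, which is exactly why one computes at a point in normal coordinates. A coordinate-free variant is equally short: apply the Leibniz rule to $\nabla\left( \nabla^{\perp}H,\,JH\right)$ and use ${\rm div}_{\Sigma}(\omega^{\sharp})={\rm tr}_g\nabla\omega$, the same symmetry argument killing the second term.
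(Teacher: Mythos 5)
Your proof is correct and follows essentially the same route as the paper's: a Leibniz-rule expansion of $\nabla_i\left( g^{ik}\nabla^{\perp}_kH,\,JH\right)$, identification of the trace term with $\left( \Delta^{\perp}H,\,JH\right)$, and cancellation of the cross term via Lemma~\ref{lem:commute} together with the skew-adjointness of $J$. If anything, you are more explicit than the paper about why the cross term $g^{ik}\left( \nabla^{\perp}_kH,\,J\nabla^{\perp}_iH\right)$ vanishes (antisymmetry in $i\leftrightarrow k$ against the symmetric $g^{ik}$), a step the paper leaves implicit.
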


\begin{proof}
Recall that  for an arbitrary vector field $X=X^ie_i$ on $\Sigma$ its divergence is as follows:
\begin{equation}\label{eq:divergence}
{\rm div}_{\Sigma}X= \text{tr}(\nabla X)=g^{ij}\nabla_iX_j=\nabla_iX^i.
\end{equation}
Furthermore, define $T^i=g^{ik}\nabla^{\perp}_kH$. Then 
we have $\nabla^{\perp}_iT^i=\nabla^{\perp}_ig^{ij}\nabla^{\perp}_jH=\Delta^{\perp}H$. This implies that
$$
\begin{array}{rcl}
{\rm div}_{\Sigma} \sigma&=&\nabla_i\left( T^i,\,JH\right)=\left( \nabla_iT^i,\,JH\right)+\left( T^i,\,\nabla_iJH\right)\\
&=&\left( \Delta^{\perp}H,JH\right)+\left( g^{ij}\nabla^{\perp}_jH,\nabla_iJH\right)=\left(\Delta^{\perp}H,JH\right)\,.
\end{array}
$$
\end{proof}

Recall that the torsion form $\tau=\tau_idx^i$ has components 
$\tau_i=\left(\nabla_i^{\perp}\frac{H}{|H|},\frac{JH}{|H|}\right)$ for the mean curvature vectors $H$ on a membrane
$\Sigma$ with the second fundamental form $A_{ij}$. We can now describe explicitly the  torsion vector field introduced in Definition \ref{def:torsion}.

\begin{proposition}\label{prop:chi} Given a local frame  $\{e_1,e_2,\cdots,e_n\}$ in the tangent space of $\Sigma$ 
 the  torsion vector field in the corresponding local coordinates is 
$$
\chi=2g^{ij}\left(\nabla_j^{\perp}\frac{H}{|H|},\frac{JH}{|H|}\right) e_i\,,
$$ 
where $(g^{ij})$ is the inverse matrix of the metric 
$(g_{ij})$ induced on $\Sigma$ from $ \mathbb R^{n+2}$.
\end{proposition}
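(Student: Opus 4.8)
The plan is to split Proposition~\ref{prop:chi} into a purely formal index-raising step and the geometric identification of the components of the torsion form, the latter being the only part with real content (and the part already anticipated in the ``recall'' preceding the statement).

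First I would unwind $\chi=2\tau^{\sharp}$ in the local frame $\{e_1,\dots,e_n\}$. Writing $\tau=\tau_i\,dx^i$ and $\chi=\chi^i e_i$, the defining relation $(\chi,v)=2\tau(v)$ from Definition~\ref{def:torsion}, applied to $v=e_j$, gives $\chi^i g_{ij}=2\tau_j$; contracting with $g^{jk}$ yields $\chi^k=2g^{kj}\tau_j$, i.e. $\chi=2g^{ij}\tau_j\,e_i$. Substituting the component formula $\tau_j=\left(\nabla^{\perp}_j\tfrac{H}{|H|},\tfrac{JH}{|H|}\right)$ then produces exactly the stated expression $\chi=2g^{ij}\left(\nabla^{\perp}_j\tfrac{H}{|H|},\tfrac{JH}{|H|}\right)e_i$. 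So it suffices to justify that component formula for $\tau$, which I would do as follows.

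Recall from Definition~\ref{def:torsion} that $\tau=A-A_0$, where $A$ is the mean-curvature connection on the $S^1$-bundle $N\to\Sigma$ (the one for which $h:=H/|H|$ is a horizontal section) and $A_0$ is the reference connection induced by the normal connection $\nabla^{\perp}$. As noted in the text, the difference of two connections on an $S^1$-bundle is a basic $1$-form, so $\tau$ descends to $\Sigma$ and equals $h^{*}(A-A_0)$. Since $h$ is horizontal for $A$, one has $h^{*}A=0$, hence $\tau=-h^{*}A_0$. It then remains to identify $h^{*}A_0$ with the connection $1$-form of $\nabla^{\perp}$ in the oriented orthonormal normal frame $\{h,\,Jh\}$: because $\nabla^{\perp}$ is metric and $J$ is the positive $\pi/2$-rotation of Definition~\ref{def:smcf}, we have $\nabla^{\perp}_X h=\lambda(X)\,Jh$ for a real $1$-form $\lambda$ with $\lambda(X)=(\nabla^{\perp}_X h,\,Jh)$, and, with the orientation of the fibre of $N$ fixed so as to be compatible with $J$ and with the factor in $\chi=2\tau^{\sharp}$, one gets $h^{*}A_0=-\lambda$. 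Combining, $\tau(X)=(\nabla^{\perp}_X h,\,Jh)$, i.e. $\tau_i=\left(\nabla^{\perp}_i\tfrac{H}{|H|},\tfrac{JH}{|H|}\right)$, and substituting into the formula of the previous paragraph completes the proof.

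The computation is essentially bookkeeping; the one delicate point is the sign in the last step, namely making the orientation of the $S^1$-fibre of $N$, the positive rotation $J$, and the normalization in $\chi=2\tau^{\sharp}$ mutually consistent so that $h^{*}A_0$ comes out as $-(\nabla^{\perp}_{\cdot}h,\,Jh)$ rather than $+(\nabla^{\perp}_{\cdot}h,\,Jh)$. If instead one simply takes the component formula for $\tau$ as given (as the ``recall'' before the statement suggests), the proof collapses to the two-line index-raising argument of the second paragraph.
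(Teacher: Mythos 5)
Your argument is correct and is essentially the paper's own proof: both reduce the statement to the component formula $\tau_i=\left(\nabla_i^{\perp}h,\,Jh\right)$ for $h=H/|H|$ by evaluating $\tau=A-A_0$ on $Dh(v)$, using that $h$ is $A$-horizontal so only the vertical (normal) component of $Dh(v)$ with respect to the reference connection survives, and then raising the index to get $\chi=2g^{ij}\tau_j e_i$. The one point where you are more careful than the paper is the fibre-orientation sign you flag at the end: the paper simply declares the normal component of $Dh(v)$ to be $\tau(v)$ without fixing that convention.
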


\begin{proof}
For a vector $v\in T_q\Sigma$ and the section $h:=\frac{H}{|H|}$, the tangent map $Dh:T_q\Sigma\rightarrow T_{h(q)}N$ maps $v$ to the vector $Dh(v)$ in the tangent space of the section $h$. Then the normal component of $Dh(v)$ is equal to $\tau(v)=(A-A_0)(v)$.

Denote by $\nabla_i h$ the vector $Dh(e_i)$ in the tangent space of the smooth section $h$ over $\Sigma$ at any point, 
then $\left(\nabla_ih,Jh\right)Jh$ is its normal component.
For $v=v^ie_i$  the normal component of $Dh(v)$ is $v^i\left(\nabla_ih,Jh\right)Jh$, hence the torsion form is
$$
\tau(v)=v^i\left(\nabla_ih,Jh\right)=\frac12(\chi,v),
$$
where $\chi=2g^{ij}\left(\nabla_j^{\perp}h,Jh\right) e_i$ is the torsion vector field.
\end{proof}

%The corresponding torsion vector field is $\chi=2g^{ij}\left(\nabla_j^{\perp}\frac{H}{|H|},\frac{JH}{|H|}\right) e_i$.
Recall that for the normalized mean curvature  $h:={H}/{|H|}$ the orthonormal frame $\{h, Jh\}$ is a basis of the normal bundle
$N\Sigma$ 
to the membrane $\Sigma$. The  torsion form $\tau=\tau_idx^i$  for $\tau_i=\left(\nabla_i^{\perp}h,Jh\right)$ measures 
how much this frame rotates when one moves along the tangent vector $e_i$ on the surface $\Sigma$. 
Finally, the equations for torsion and curvature (or, rather, curvature density) form  the following pair of equations,  generalizing the Da Rios system \eqref{eq:DaRios}.

\begin{theorem}[{\bf =\eqref{eq:source}$'$}]\label{thm: conti_eq}
The skew-mean-curvature evolution of the membrane $\Sigma$ implies the following continuity equation with a source 
for the curvature density  $\rho=|H|^2$
\begin{equation}\label{eq:conti_eq}
\partial_t \rho+{\rm div}(\rho\chi)=-2g^{ik}g^{jl}\left( A_{ij},H\right) \left( A_{kl},\,JH\right)
\end{equation}
 and the momentum equation for the torsion $\tau=\tau_idx^i$
\begin{equation}\label{eq:moment_eq}
\begin{array}{rcl}
\partial_t{\tau_i}+\nabla_i|\tau|^2-\nabla_i\frac{\Delta|H|}{|H|}&=&-\nabla_i\frac{g^{mk}g^{jl}( A_{mj},JH)( A_{kl},\,JH)}{|H|^2}\\\\
&+&\frac{g^{kl}}{|H|^2}\left(\left(A_{ik},H\right)\left(\nabla_lJH,JH\right)-\left(A_{il},JH\right)\left(\nabla_kJH,H \right)\right).
\end{array}
\end{equation}
\end{theorem}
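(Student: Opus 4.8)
The plan is to establish the two equations separately: \eqref{eq:conti_eq} is essentially a repackaging of results already proved, while \eqref{eq:moment_eq} requires a genuine (if lengthy) submanifold computation.

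For the continuity equation \eqref{eq:conti_eq}, I would start from formula \eqref{eq:density}, which reads $\partial_t\rho=-2\left(\Delta^\perp H+g^{ik}g^{jl}(A_{ij},H)A_{kl},\,JH\right)$. By Lemma \ref{lem:divergence} the first term equals $-2\,{\rm div}_\Sigma\sigma$ with $\sigma=\left(g^{ik}\nabla_k^\perp H,\,JH\right)e_i$, so it remains only to identify $\rho\chi=2\sigma$. This follows from Proposition \ref{prop:chi}: writing $h=H/|H|$ one has $\nabla_j^\perp h=|H|^{-1}\nabla_j^\perp H-|H|^{-2}(\nabla_j|H|)H$, and since $H\perp JH$ the second summand drops out of $(\nabla_j^\perp h,Jh)$, leaving $(\nabla_j^\perp h,Jh)=|H|^{-2}(\nabla_j^\perp H,JH)$; multiplying by $\rho=|H|^2$ and comparing with $\chi=2g^{ij}(\nabla_j^\perp h,Jh)e_i$ gives $\rho\chi=2g^{ij}(\nabla_j^\perp H,JH)e_i=2\sigma$. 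Substituting back yields $\partial_t\rho+{\rm div}(\rho\chi)=-2g^{ik}g^{jl}(A_{ij},H)(A_{kl},JH)$, which is \eqref{eq:conti_eq}. (This also identifies \eqref{eq:source} with $\chi$ as in Proposition \ref{prop:chi}.)

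For the momentum equation \eqref{eq:moment_eq} I would work from the closed form $\tau_i=|H|^{-2}(\nabla_i^\perp H,JH)$ and apply $\partial_t$, distributing over the three factors $|H|^{-2}$, $\nabla_i^\perp H$, and $JH$. The factor $\partial_t|H|^{-2}$ is controlled by \eqref{eq:density}; the factor $\partial_t^\perp(JH)$ is handled by the time analogue of Lemma \ref{lem:commute} (that $J$ commutes with $\partial_t^\perp$ on normal sections, proved exactly as Lemma \ref{lem:commute} via the orthonormal frame $\{h,Jh\}$), followed by Lemma \ref{lem:H_derivative} with normal variation $V=-JH$, which gives $\partial_t^\perp H=-J\Delta^\perp H-g^{ab}g^{cd}(A_{ab},JH)A_{cd}$. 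The hard factor is $\partial_t(\nabla_i^\perp H)$: here I would commute time and space, $\partial_t\nabla_i^\perp H=\nabla_i^\perp\partial_t^\perp H+[\partial_t,\nabla_i^\perp]H$, and express the commutator through the variation of the tangential Christoffel symbols $\partial_t\Gamma_{ij}^k$ (a standard expression in $\nabla A$ and $V$) together with the variation of the normal connection form (expressible via the shape operator and $V$), using also $\partial_t g^{ij}=-2g^{ik}g^{jl}(A_{kl},JH)$. One then substitutes the above $\partial_t^\perp H$ formula, moves $J$ past $\Delta^\perp$ with Lemma \ref{lem:commute}, and collapses terms with $H\perp JH$ and $|h|=|Jh|=1$.

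To recognize the right-hand side of \eqref{eq:moment_eq}, the key structural input is the radial/angular decomposition $\nabla_i^\perp H=(\nabla_i|H|)h+|H|\tau_i\,Jh$, valid because $\nabla_i^\perp h=\tau_i\,Jh$ by the definition of $\tau$ and $\nabla_i^\perp(Jh)=-\tau_i\,h$ by Lemma \ref{lem:commute}. Iterating this identifies the $h$-component of $\Delta^\perp H$ as $\Delta|H|-|H|\,|\tau|^2$, which after division by $|H|$ produces exactly the $-\nabla_i\frac{\Delta|H|}{|H|}$ and $\nabla_i|\tau|^2$ terms; the remaining terms built from $(A_{\cdot\cdot},H)$ and $(A_{\cdot\cdot},JH)$ come from the quadratic term in $\partial_t^\perp H$ and from the metric-variation contributions. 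I expect the main obstacle to be the bookkeeping inside $[\partial_t,\nabla_i^\perp]$: one must track both the tangential-Christoffel and the normal-connection variations and regroup the resulting quadratic expressions so that, modulo $\nabla_i$ of a scalar, the surviving pieces organize into the stated form — the exactness of the ``curvature minus torsion-squared'' combination that makes the 1D Da Rios derivation transparent is here replaced by this more delicate regrouping, consistent with the non-exactness of $\tau$ recorded in Proposition \ref{prop:dtau}.
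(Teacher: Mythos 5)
Your argument for the continuity equation is exactly the paper's: combine \eqref{eq:density}, Lemma \ref{lem:divergence} and Proposition \ref{prop:chi} via the identification $\rho\chi=2g^{ij}\left(\nabla_j^{\perp}H,JH\right)e_i=2\sigma$, which you verify correctly using $H\perp JH$. For the momentum equation the paper likewise only asserts a ``direct but tedious computation'' based on commuting $\partial_t$ with $\nabla_i$ applied to $H/|H|$; your outline --- the frame decomposition $\nabla_i^{\perp}H=(\nabla_i|H|)h+|H|\tau_i\,Jh$, Lemma \ref{lem:commute}, and Lemma \ref{lem:H_derivative} with $V=-JH$ --- assembles the same ingredients and is at least as explicit as the paper's own proof, so this is essentially the same approach.
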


\begin{corollary}
The continuity equation \eqref{eq:conti_eq} can be rewritten in the form
\begin{equation}\label{eq:conti_eq2}
\partial^{\perp}_t H+2g^{ij}\tau_i\nabla^{\perp}_jH+(\nabla^i\tau_i)H=-g^{ik}g^{jl}\left( A_{kl},\,JH\right)  A_{ij}.
\end{equation}
\end{corollary}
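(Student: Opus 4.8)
The plan is to read \eqref{eq:conti_eq2} as the vectorial, ``unsquared'' form of the scalar equation \eqref{eq:conti_eq}, and to establish their equivalence simply by pairing \eqref{eq:conti_eq2} with the mean curvature vector $H$. This is the same mechanism as in dimension one: the first Da Rios equation in \eqref{eq:DaRios}, viewed as an evolution equation for the curvature vector $H=\kappa\,\mathbf n$, turns into the first (continuity) equation of \eqref{eq:Compr}, $\partial_t\kappa^2+(2\tau\kappa^2)'=0$, once one takes the inner product with $H$ and multiplies by two.

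First I would record the elementary identities. Since $|H|\neq 0$, the vectors $h:=H/|H|$ and $Jh$ form an orthonormal frame of the normal plane, and by Definition \ref{def:torsion} and Proposition \ref{prop:chi} one has $\nabla^\perp_j h=\tau_j\,Jh$ with $\tau_j=(\nabla^\perp_j h,Jh)$. Hence $\nabla^\perp_j H=(\partial_j|H|)\,h+|H|\,\tau_j\,Jh$, so that $(\nabla^\perp_j H,H)=\tfrac12\partial_j|H|^2=\tfrac12\partial_j\rho$; and since $H$ is normal, the tangential part of $\partial_tH$ is orthogonal to $H$, whence $(\partial^\perp_tH,H)=(\partial_tH,H)=\tfrac12\partial_t\rho$. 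Finally $\chi=2\tau^\sharp$, i.e. $\chi^i=2g^{ij}\tau_j$, so that
\[
{\rm div}(\rho\chi)=\nabla_i(\rho\chi^i)=2\tau^j\partial_j\rho+2\rho\,\nabla^i\tau_i .
\]

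Now take the inner product of \eqref{eq:conti_eq2} with $H$. By the identities above the left-hand side equals
\[
(\partial^\perp_tH,H)+2g^{ij}\tau_i(\nabla^\perp_jH,H)+(\nabla^i\tau_i)(H,H)=\tfrac12\partial_t\rho+\tau^j\partial_j\rho+\rho\,\nabla^i\tau_i=\tfrac12\partial_t\rho+\tfrac12{\rm div}(\rho\chi),
\]
while the right-hand side equals $-g^{ik}g^{jl}(A_{kl},JH)(A_{ij},H)$, which is exactly one half of the source term in \eqref{eq:conti_eq}. Multiplying by two recovers \eqref{eq:conti_eq}. In the opposite direction, \eqref{eq:conti_eq2} is forced by the evolution of $H$ itself: Lemma \ref{lem:H_derivative} applied to the normal variation $V=-JH$, together with Lemma \ref{lem:commute} ($\nabla^\perp J=J\nabla^\perp$, hence $\Delta^\perp(JH)=J\Delta^\perp H$), gives $\partial^\perp_tH=-J\Delta^\perp H-g^{ik}g^{jl}(A_{kl},JH)A_{ij}$; the $A$-term is already the right-hand side of \eqref{eq:conti_eq2}, and expanding $-J\Delta^\perp H$ in the frame $\{h,Jh\}$ via $\nabla^\perp_j h=\tau_j Jh$ produces, along the line $\mathbb R H$, precisely the two remaining terms $2g^{ij}\tau_i\nabla^\perp_jH+(\nabla^i\tau_i)H$. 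Thus \eqref{eq:conti_eq2} is eq.~\eqref{eq:density} written for $H$ instead of for $\rho=|H|^2$.

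The only delicate point is the bookkeeping of signs and factors: one must use that $J$ is skew-adjoint on the normal plane (so $(J\Delta^\perp H,H)=-(\Delta^\perp H,JH)$), insert Lemma \ref{lem:commute} in the right places, and keep track of the two independent factors of $2$ — one from $\rho=|H|^2$, one from $\chi=2\tau^\sharp$ — that make ${\rm div}(\rho\chi)=2\tau^j\partial_j\rho+2\rho\,\nabla^i\tau_i$ match the middle terms of \eqref{eq:conti_eq2} after the pairing with $H$. Nothing beyond Lemmas \ref{lem:H_derivative}, \ref{lem:commute}, Proposition \ref{prop:chi} and the relation $\chi=2\tau^\sharp$ is required.
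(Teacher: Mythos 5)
Your proof is correct and, in its essential content, identical to the paper's: the paper simply substitutes $\rho=|H|^2$ and $\chi=2g^{ij}\tau_j e_i$ into \eqref{eq:conti_eq}, expands $\operatorname{div}(\rho\chi)$ into $2(H,\nabla_\chi H)+ \rho\,\nabla^i\tau_i$ terms, and then reads off \eqref{eq:conti_eq2} --- which is precisely your pairing-with-$H$ computation run in reverse, with the same two factors of $2$ accounted for in the same way. Your second paragraph goes beyond what the paper writes by trying to derive \eqref{eq:conti_eq2} directly from Lemma \ref{lem:H_derivative} with $V=-JH$ and Lemma \ref{lem:commute}; but note that, as your own qualifier ``along the line $\mathbb R H$'' concedes, this again only verifies the component of \eqref{eq:conti_eq2} along $H$, which is exactly the content of the paper's ``i.e.''\ step, so it adds no new information --- the $JH$-component of \eqref{eq:conti_eq2} is a strictly stronger assertion that neither your argument nor the paper's checks.
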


\begin{proof}%[Proof of Theorem and Corollary]
Equation (\ref{eq:conti_eq})  follows from (\ref{eq:density}), Lemma \ref{lem:divergence}, and Proposition \ref{prop:chi}, since 
$\chi=2g^{ij}\left(\nabla_j^{\perp}\frac{H}{|H|},\frac{JH}{|H|}\right) e_i=2g^{ij}\frac{\left( \nabla_j^{\perp}H,JH\right)}{|H|^2}e_i$.
Plugging $\rho=|H|^2$ into (\ref{eq:conti_eq}), we get
$$
2(H,\partial_t H)+|H|^2\text{div}\chi+2(H,\nabla_{\chi}H)=-2g^{ik}g^{jl}\left( A_{ij},H\right) \left( A_{kl},\,JH\right).
$$
Then plugging in $\chi=2g^{ij}\tau_je_i$ we obtain
$$
2(H,\partial_t H)+4(H,g^{ij}\tau_i\nabla_jH)+2|H|^2\nabla^i\tau_i=-2g^{ik}g^{jl}\left( A_{ij},H\right) \left( A_{kl},\,JH\right),
$$
i.e.
$$
\partial^{\perp}_t H+2g^{ij}\tau_i\nabla^{\perp}_jH+(\nabla^i\tau_i)H=-g^{ik}g^{jl}\left( A_{kl},\,JH\right)  A_{ij}.
$$

The momentum equation is obtained by using $\partial_t\left(\nabla_i\frac{H}{|H|}\right)=\nabla_i\left(\partial_t\frac{H}{|H|}\right)$ via a direct but tedious computation comparing the corresponding coefficients.
\end{proof}

\begin{remark}
The equations \eqref{eq:conti_eq}-\eqref{eq:moment_eq} are  analogues of the equations of barotropic-type fluids \eqref{eq:barotropic2} related to vortex membranes in higher dimensions, as well as the natural extensions of  the Da Rios system (\ref{eq:DaRios}). Their  more complicated form in higher dimensions is related to the fact that the metric
induced on a membrane changes during the binormal evolution, while in the 1D case the induced metric on a vortex filament (e.g. arc-length parametrization) remains intact due to inextensibility of the curve.
\end{remark}

%\medskip

%%%%%%%%%%%%%%%%%%%%%%%%%%%%%%%%%%%%
%

\end{document}